\pdfoutput = 1

%%% Setting ifnal flag to 0 removes margin notes and table of contents and changes the
%%% bibliography style to plain. (Final here means ``long version'').
%\def\final{0}

\def\llncs{0}
\def\mnotes{0}
    % Set it to 1 to see (some of) the differences between LNCS and FULL versions. Ensure llncs= 1.

\ifnum\llncs=1
    \documentclass[envcountsect,runningheads]{llncs}
    \usepackage{pifont}
\else
    \documentclass[11pt]{article}
    \usepackage{fullpage,times, amsthm,latexsym,amssymb,graphicx}
    %Requires flags \final and \llncs
\def\colorson{0}

\usepackage{color,amsfonts,amsmath,url}
%\ifnum\final=0 \usepackage{showlabels} \fi

\ifnum\llncs=0

\usepackage{wrapfig}

% To get this to work well with pdflatex use the option ``pdftex''
% Otherwise use ``dvips''
% To generate back links in biblio, use ``backref=true''
% (this option doesn't currently work)
% To generate page back links in biblio, use ``backref=page''

%These options will turn all links black, when used with colorlinks=true
%linkcolor=black,citecolor=black,filecolor=black,menucolor=black,pagecolor=black,urlcolor=black
\usepackage[colorlinks=true,breaklinks=true,linkcolor=black,
citecolor=black,filecolor=black,menucolor=black,
pagecolor=black,urlcolor=black]{hyperref}

%% The commands below allow including backreferences within the bibliography.
% Comment these lines out in order to disable the feature.
%\usepackage[hyperpageref]{backref}
%\renewcommand{\backref}[1]{Cited on p.~{#1}}

%\renewcommand{\backref}[1]{}
%\renewcommand*{\backrefalt}[4]{
% \ifcase #1 %
%   (No citations.)%
% \or
%   (Cited on page  #2.)%
% \else
%   (Cited on pages  #2.)%
% \fi
%}

%--- fullpage ---
\topmargin 0pt

\advance \topmargin by -\headheight

\advance \topmargin by -\headsep

%\advance \topmargin by -0.4in

 \textheight 9in

%--- fullpage ---

\fi %matches ifnum\llncs=0

\ifnum\colorson=1
\newenvironment{todo}{\noindent
\sf \footnotesize \textcolor{blue}{To go here}:
\begin{CompactItemize}\color{blue}}
{\color{black}\end{CompactItemize}\rm \normalsize}
\else

\fi
%%%%%%%%%%%%%%%%%%%%%%%%%%%%%%%%%%%%%%%%%%%%%%%%%%%%%

\ifnum\llncs=0

\newtheorem{theorem}{Theorem}[section]
\newtheorem{lemma}[theorem]{Lemma}
\newtheorem{sublemma}[theorem]{SubLemma}
\newtheorem{prop}[theorem]{Proposition}

\newtheorem{claim}[theorem]{Claim}
\newtheorem{myclaim}[theorem]{Claim}
\newtheorem{fact}[theorem]{Fact}
\newtheorem{corollary}[theorem]{Corollary}
\newtheorem{conjecture}{Conjecture}
\newtheorem{observation}{Observation}

\newtheorem{definition}{Definition}[section]
\newtheorem{defn}[definition]{Definition}

\newtheorem{protocol}{Protocol}%[section]

\theoremstyle{definition}

%[section]

\theoremstyle{remark}

\newtheorem{myremark}{Remark} [section]
\newenvironment{remark}{\begin{myremark}}{$\diamondsuit$\end{myremark}}

\newtheorem{myexample}{Example}

\else
%For the LLNCS style:

\newtheorem{claim}[section]{Claim}

\newtheorem{remark}[section]{Remark}

\newtheorem{definition}{Definition}[section]

%[section]

\fi
\fi

\usepackage{amsfonts,amsmath, amsopn, amssymb, epsfig, graphicx, multirow, verbatim}
\usepackage{color,bm}
\ifnum\mnotes=0
\newcommand{\mnote}[1]{}
\else
\newcounter{mynotes}
\setcounter{mynotes}{0}
%\ifnum\colorson=1
%\newcommand{\mnote}[1]{\addtocounter{mynotes}{1}{%
%\textcolor{red}{\bf !}}%
%\marginpar{\scriptsize\textcolor{blue}{\arabic{mynotes}.\ {\sf
%#1}}}} \else
\newcommand{\mnote}[1]{\addtocounter{mynotes}{1}{{\bf !}}%
\marginpar{\scriptsize  {\arabic{mynotes}.\ {\sf \textcolor{red}{#1}}}}}
%\fi %matches \ifnum\colorson=1
\fi

\newcommand{\snote}[1]{\mnote{S: #1}}

\newcommand{\gnote}[1]{\mnote{G: #1}}

\newcommand{\secref}[1]{Section~\ref{sec:#1}}

 %defined for consistency

\newcommand{\ith}{i^{th}}

\newcommand{\kth}{k^{th}}

\newenvironment{myproof}{\begin{proof}%\small
}{\ifnum\llncs=1
{~}\qed
\fi
\end{proof}}
\newenvironment{proofof}[1]{\begin{myproof}[Proof of {#1}]%\small
}{\end{myproof}}

\newcommand{\thmref}[1]{Theorem~\ref{thm:#1}}

\newcommand{\lemref}[1]{Lemma~\ref{lem:#1}}

\newcommand{\corref}[1]{Corollary~\ref{cor:#1}}
\newcommand{\claimref}[1]{Claim~\ref{claim:#1}}

% With hyperref, this kaes the whole phrase a link
%\newcommand{\secref}[1]{\href{sec:#1}{Section~\ref*{sec:#1}}}

\newcommand{\ignore}[1]{}

\DeclareMathOperator{\polylog}{polylog}
\DeclareMathOperator{\loglog}{\log\log}
\DeclareMathOperator{\poly}{poly}

\newcommand{\etal}{{\it et~al.}}

\newcommand{\calh}{{\cal H}}
\newcommand{\calg}{{\cal G}}

\newcommand{\fone}{{\vec{1}}}
\newcommand{\fzero}{{\vec{0}}}

%Definitions used in LP lower bound section

\newcommand{\Vol}{V}
\newcommand{\dprec}{\prec\!\prec}

\begin{document}

\ifnum\llncs=1
    \bibliographystyle{splncs}
\else
    \bibliographystyle{abbrv}
\fi

\title{Steiner Transitive-Closure Spanners of $d$-Dimensional Posets}

\ifnum\llncs=1

\author{
Piotr Berman\inst{1}
and Arnab Bhattacharyya\inst{2}
and Elena Grigorescu\inst{3}
and Sofya Raskhodnikova\inst{1}\thanks{Supported by NSF CAREER award 0845701.}
and David P. Woodruff\inst{4}
and Grigory Yaroslavtsev\inst{1}
}

\institute{
Pennsylvania State University, USA. {\tt \{berman, sofya, grigory\}@cse.psu.edu}.
S.R. and G.Y. are supported by NSF / CCF CAREER award 0845701. G.Y. is also supported by University Graduate Fellowship and College of Engineering Fellowship.
\and Massachusetts Institute of Technology, USA. {\tt abhatt@mit.edu}.
\and Georgia Institute of Technology, USA. {\tt elena@cc.gatech.edu}.
\and IBM Almaden Research Center, USA. { \tt dpwoodru@us.ibm.com}.
}
\else
   \author{
        Piotr Berman\thanks{Pennsylvania State University, USA. {\tt \{berman, sofya, grigory\}@cse.psu.edu}.
        S.R. and G.Y. are supported by NSF / CCF CAREER award 0845701. G.Y. is also supported by University Graduate Fellowship and College of Engineering Fellowship.}
        \and Arnab Bhattacharyya\thanks{Massachusetts Institute of Technology, USA. {\tt abhatt@mit.edu }}
        \and Elena Grigorescu\thanks{Georgia Institute of Technology, USA. {\tt elena@cc.gatech.edu}.
        Supported in part by NSF award CCR-0829672 and NSF award 1019343 to the Computing Research Association for the Computing Innovation Fellowship Program.}
        \and Sofya Raskhodnikova\protect\footnotemark[1]
        \and David P. Woodruff\thanks{IBM Almaden Research Center, USA. { \tt dpwoodru@us.ibm.com}.}
        \and Grigory Yaroslavtsev\protect\footnotemark[1]
   }
\fi

\maketitle

\begin{abstract}
Given a directed graph $G = (V,E)$ and an integer $k \geq 1$, a {\em $k$-transitive-closure-spanner} ($k$-TC-spanner) of $G$ is a directed graph $H = (V, E_H)$ that has (1)~the same transitive-closure as $G$ and (2) diameter at most~$k$.  In some applications, the shortcut paths added to the graph in order to obtain small diameter can use Steiner vertices, that is, vertices not in the original graph $G$. The resulting spanner is called a {\em Steiner transitive-closure spanner} (Steiner TC-spanner).

Motivated by applications to property reconstruction and access control hierarchies, we concentrate on Steiner TC-spanners of directed acyclic graphs or, equivalently, partially ordered sets. In these applications, the goal is to find a sparsest Steiner $k$-TC-spanner of a poset $G$ for a given $k$ and $G$.
The focus of this paper is the relationship between the dimension of a poset and the size of its sparsest Steiner TC-spanner. The dimension of a poset $G$ is the smallest $d$ such that $G$ can be embedded into a $d$-dimensional directed hypergrid via an order-preserving embedding.

We present a nearly tight lower bound on the size of Steiner 2-TC-spanners of $d$-dimensional directed hypergrids. It implies better lower bounds on the complexity of local reconstructors of monotone functions and functions with low Lipschitz constant. The proof of the lower bound constructs a dual solution to a linear
programming relaxation of the Steiner 2-TC-spanner problem.
We also show that one can efficiently construct a Steiner $2$-TC-spanner, of size matching the lower bound, for any low-dimensional poset.
\ignore{We also give the first Steiner 2-TC-spanner construction for
$d$-dimensional posets and show that our construction gives
2-TC-spanners of nearly-optimal size.} Finally, we present a lower bound on
the size of Steiner $k$-TC-spanners of $d$-dimensional posets that shows that
the best-known construction, due to De Santis~\etal, cannot be improved significantly.
\end{abstract}

\thispagestyle{empty}
\setcounter{page}{0}
\newpage
\section{Introduction}\label{sec:intro}
Graph spanners were introduced in the context of distributed
computing~\cite{PelegSchaffer}, and since then have found numerous
applications.
Our focus is on transitive-closure spanners, introduced explicitly in \cite{tc-spanners-soda}, but studied prior to that in many different contexts~\cite{CFL83a, CFL83b, Yao82,AlonSchieber87,cha87,tho92,BTS94,thorup95,thorup97,DGLRRS99,h03,AFB05,ABF06,atallah-journal}.

%Given a directed graph $G = (V,E)$ and an integer $k \geq 1$, a $k$-transitive-closure-spanner ($k$-TC-spanner) of $G$ is a directed graph $H = (V, E_H)$ that has (1)~the same transitive-closure as $G$ and (2) diameter at most~$k$.
Given a directed graph $G=(V,E)$ and an integer $k \geq 1$, a
\textbf{$k$-transitive-closure-spanner} ($k$-TC-spanner) of $G$ is a directed graph $H=(V,E_H)$
satisfying:
(1) $E_H$ is a subset of the edges in the transitive closure of $G$;
(2) for all vertices $u,v\in V$, if $d_G(u,v) < \infty$ then $d_H(u,v) \leq k$.
That is, a $k$-TC-spanner is a graph with a small diameter
that preserves the connectivity of the original graph.
%%
%It can also be viewed as the $k$-multiplicative spanner of the transitive closure of $G$.
The edges from the transitive closure of $G$ that are added to $G$ to obtain a TC-spanner are called {\em shortcut edges} and the parameter $k$ is called the {\em stretch}.

TC-spanners have numerous applications, and there has been lots of work on finding sparse TC-spanners for specific graph families. (See~\cite{Ras-survey10} for a survey.) In some applications of TC-spanners (in particular, to access control hierarchies \cite{ABF06,AFB05,SFM07,atallah-journal}), the shortcuts can use {\em Steiner} vertices, that is, vertices not in the original graph $G$. The resulting spanner is called a {\em Steiner TC-spanner}.
\begin{definition}[Steiner TC-spanner]\label{def:steiner-tc-spanner}
Given a directed graph $G=(V,E)$ and an integer $k \geq 1$, a
\textbf{Steiner $k$-transitive-closure-spanner} (\textbf{Steiner $k$-TC-spanner}) of $G$ is a directed graph $H=(V_H,E_H)$
satisfying:
%\begin{enumerate}
%\item
{\em(1)} $V\subseteq V_H$;
%\item $E_H$ is a subset of the edges in the transitive closure of $G$.
%\item
{\em(2)} for all vertices $u,v\in V$,
%\begin{itemize}
%\item[]
%\begin{center}
if $d_G(u,v) < \infty$ then $d_H(u,v) \leq k$ and
%\item[]
if $d_G(u,v) = \infty$ then $d_H(u,v) = \infty$.
%\end{center}
%\end{itemize}
%\end{enumerate}
Vertices in $V_H\backslash V$ are called {\em Steiner vertices}.%, and the parameter $k$ is called the \textbf{stretch}.
\end{definition}
For some graphs, Steiner TC-spanners can be significantly sparser than ordinary TC-spanners. For example, consider a  complete bipartite graph $K_{\frac n2, \frac n 2}$ with $n/2$ vertices in each part and all edges directed from the first part to the second.  Every ordinary 2-TC-spanner of this graph has $\Omega(n^2)$ edges. However,  $K_{\frac n2, \frac n 2}$ has a Steiner 2-TC-spanner with $n$ edges: it is enough to add one Steiner vertex $v$, edges to $v$ from all nodes in the left part, and edges from $v$ to all nodes in the right part. Thus, for $K_{\frac n2, \frac n 2}$ there is a linear gap between the size of the sparsest Steiner 2-TC-spanner and the size of an ordinary 2-TC-spanner.
%\snote{Generalize to show the gap for general k: multi-layered complete graph}
%\begin{figure*}
\begin{center}
\ifnum\llncs=1
\includegraphics[width=\columnwidth]{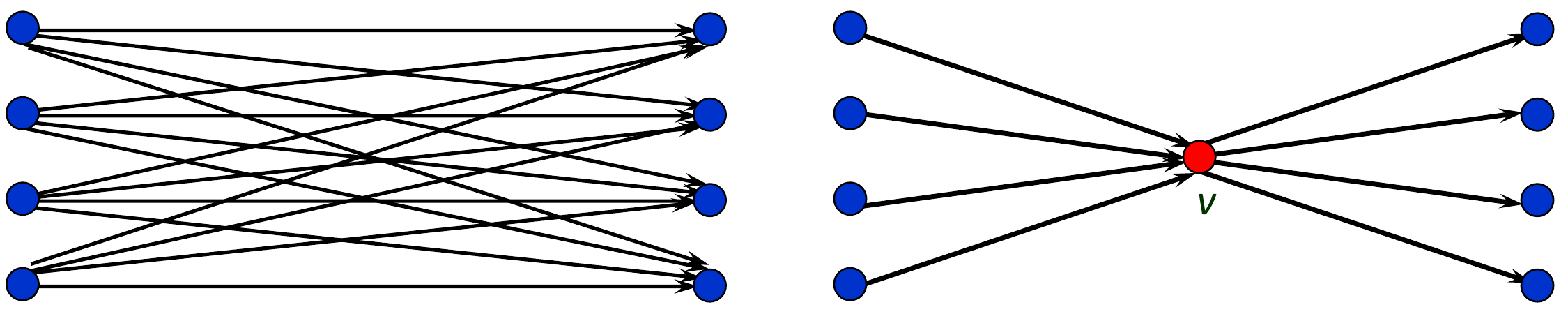}
\else
\includegraphics[width=6in]{tc-spanner-survey-illustration_steiner-tc-spanner}
\fi
%\caption{Directed line $L_n$ and its transitive closure TC($L_n$)}
%\label{fig:tc-line}
\end{center}
%\end{figure*}

We concentrate on Steiner TC-spanners of directed {\em acyclic} graphs (DAGs) or, equivalently, partially ordered sets (posets) because they represent the most interesting case in applications of TC-spanners.
In addition, there is a reduction from constructing TC-spanners of graphs
 with cycles to constructing TC-spanners of DAGs, with a small loss in stretch (\cite{Ras-survey10}, Lemma~3.2), which also applies to Steiner TC-spanners.

The goal of this work is to understand the minimum number of edges
needed to form a Steiner $k$-TC-spanner of a given graph $G$ as a
function of $n$, the number of nodes in $G$. More specifically, motivated by applications to access control hierarchies \cite{ABF06,AFB05,SFM07,atallah-journal} and property reconstruction \cite{bgjjrw2010,lipschitz}, described in Section~\ref{sec:applications}, we study the relationship between the dimension of a poset and the size of its sparsest Steiner TC-spanner. The {\em dimension} of a poset $G$ is the smallest $d$ such that $G$ can be embedded into a $d$-dimensional directed hypergrid via an order-preserving embedding. (See Definition~\ref{def:poset-dimension}).  Atallah~\etal~\cite{atallah-journal}, followed by De Santis~\etal~\cite{SFM07},
use Steiner TC-spanners in key management schemes for access control hierarchies. They argue that many access control hierarchies are low-dimensional posets that come equipped with an embedding
demonstrating low dimensionality. For this reason, we focus on the setting where the dimension $d$ is small relative to the
number of nodes $n$.

We also study the size of sparsest (Steiner) 2-TC-spanners of specific posets of dimension $d$, namely, $d$-dimensional directed hypergrids. Our lower bound on this quantity improves the result in~\cite{bgjjrw2010} and nearly matches the upper bound from that paper. It implies that our construction of Steiner 2-TC-spanners of $d$-dimensional posets cannot be improved significantly. It also has direct implications for property reconstruction. The focus on stretch $k=2$ is motivated by both applications.

\ifnum\llncs=1
Several classes of posets, in addition to low-dimensional hypergrids, are known to have small dimension. For example, if a {\em planar poset}, that is, a poset with a planar Hasse diagram,  has
{\em both} a minimum {\em and} a maximum element, it has dimension at most $2$.\snote{Old: cite handbook of combinatorics 1 vol 1; all the references below are mentioned there} A planar poset with {\em either} a minimum or a maximum element has dimension at most $3$ \cite{TrotterMoore}. (In general, however, a planar poset can have an arbitrary dimension \cite{kelley81}).
%Do we need that?
%Viewing ${\cal H}_{m,d}$ as a partial ordering of set inclusion, Dushnik and Miller \cite{dush-miller} show that ${\cal H}_{m,d}$ has dimension at least $d$.  ${\cal H}_{m,d}$ has dimension at most $d$ by definition.
One can also bound the dimension in terms of cardinality of the poset: every poset of cardinality $n\geq 4$ has dimension at most $n/2$
\cite{Hiraguchi51}. Also, if every element in an $n$-element poset has at most $u$ points above it,
then its dimension is at most $2(u+1)\log n +1$ \cite{furedikahn}.  Thus, posets with low dimension
occur quite naturally in a variety of settings.
%
% Atallah~\etal give a nice summary of relevant work on poset dimension; we mention the most salient results, all due to Schnyder~\cite{Schnyder89}. A poset $G$ with planar transitive reduction has dimension at most 3. Moreover, the corresponding embedding of $G$ into a hypergrid is computable in linear time. If the transitive reduction of $G$ is 4-colorable, then $G$ has dimension at most 4.
%For instance, planar posets with a maximal or minimal element have dimension at most $3$ by a
%result of Trotter and Moore  and they admit sparse TC-spanner constructions as indicated
%above.
\fi

%In addition to studying general posets of dimension $d$, we pay special attention to 2-TC-spanners of the $d$-dimensional hypercube and hypergrid. This is motivated by applications to property testing.

\subsection{Our Results}\label{sec:results}
\paragraph{Steiner 2-TC-spanners of directed $d$-dimensional grids.} The {\em directed hypergrid}, denoted ${\cal H}_{m,d}$, has vertex set\footnote{For a positive integer $m$, we denote $\{1,\dots, m\}$ by $[m]$.}
$[m]^d$ and edge set $\{(x,y) : \exists \text{ unique } i \in [d]\text{
  such that } y_i-x_i = 1 \text{ and if } j\neq i, y_j = x_j\}$. We observe (in \corref{steiner-points-do-not-help}) that for the grid ${\cal H}_{m,d}$, Steiner vertices do not help to create sparser $k$-TC-spanners. In \cite{bgjjrw2010}, it was shown that for $m\geq 3$, sparsest (ordinary) 2-TC-spanners of $\calh_{m,d}$ have size at most $m^d \log^d m$ and at least $\Omega\left(\frac{m^d \log^d m}{(2d\log \log m)^{d-1}}\right)$. They also give tight upper and lower bounds for the case of constant $m$ and large $d$. Our first result is an improvement on the lower bound for the hypergrid for the case when $m$ is significantly larger than $d$.

\begin{theorem}\label{thm:grid-lb}
%For $m > e^{2d}$,
Every (Steiner) 2-TC-spanner of $\mathcal{H}_{m,d}$ has
$\displaystyle\Omega\Big(\frac{m^d (\ln m-1)^d}{(4\pi)^d}\Big)$ edges.
\end{theorem}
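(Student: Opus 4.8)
The theorem asks for a lower bound on the size of any Steiner 2-TC-spanner of the directed hypergrid $\mathcal{H}_{m,d}$. Since the abstract tells us the proof "constructs a dual solution to a linear programming relaxation of the Steiner 2-TC-spanner problem," my plan would follow exactly that route. First I would write down the natural LP relaxation: the Steiner 2-TC-spanner problem is a covering problem — every comparable pair $(x,y)$ with $x \prec y$ in the hypergrid must be "covered" either by a direct shortcut edge $(x,y)$, or by a length-2 path $x \to w \to y$ through some (possibly Steiner) vertex $w$. So the LP has a variable for each potential edge (including edges incident to Steiner vertices), a constraint for each comparable pair demanding that the pair be covered, and it minimizes the total (fractional) number of edges. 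By LP duality, any feasible dual solution — i.e., an assignment of nonnegative weights to the comparable pairs such that no potential edge is "overloaded" by the pairs it helps cover — yields a lower bound equal to the total dual weight. The crux is to exhibit such a dual solution with total weight $\Omega\big(m^d(\ln m - 1)^d / (4\pi)^d\big)$.

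**Choosing the dual weights.**

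The key difficulty, and the place I'd spend most effort, is designing the dual (the pair-weights) so that (a) the total weight is large, and (b) the per-edge constraint is satisfied even for edges touching Steiner vertices, which is what makes the bound apply to Steiner spanners and not just ordinary ones. The natural idea is to put weight only on pairs $(x,y)$ that are "spread out" in every coordinate — say pairs with $y_i - x_i$ in a dyadic range, or more cleanly, weight inversely proportional to $\prod_i (y_i - x_i)$ so that the number of midpoints $w$ with $x \le w \le y$ (roughly $\prod_i(y_i - x_i + 1)$) is controlled. To handle Steiner vertices one observes that a Steiner midpoint $w$ can only help cover a pair $(x,y)$ if $w$ lies "between" $x$ and $y$ in the induced order; since the hypergrid's order is just the coordinatewise order, this constrains $w$'s reachability set, and the combinatorics reduce to counting lattice boxes. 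The factors $\ln m - 1$ and $4\pi$ strongly suggest the one-dimensional calculation is: assign weight $\approx 1/(b-a)$ to each interval $[a,b] \subseteq [m]$, note that each point $c$ is the midpoint-ish witness for intervals summing to a harmonic-like quantity $\Theta(\ln m)$, and then tensorize over the $d$ coordinates, with the $4\pi$ (likely $2\pi$ per dimension squared, or a Stirling/integral constant) coming from optimizing the constant in the one-dimensional bound — perhaps via $\int \frac{dt}{t}$ estimates or a Gaussian-type normalization. I would first nail the $d=1$ case cleanly, getting a bound like $\Omega(m \ln m / c)$ for the right constant $c$, then argue the $d$-dimensional dual is the product of the one-dimensional weights and verify the product constraint factorizes.

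**Verifying feasibility and assembling the bound.**

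With the weights chosen, the remaining steps are: (i) verify the dual constraint for every potential shortcut edge $e = (u,v)$ — i.e., that $\sum_{(x,y)\, :\, e \text{ helps cover } (x,y)} w_{x,y} \le 1$; an edge $e$ "helps cover" $(x,y)$ if $e = (x,y)$ itself, or if $e = (x,w)$ with $w \le y$, or $e = (w,y)$ with $w \ge x$; in all cases the set of pairs is a sub-box of the hypergrid and the weighted sum is a product of one-dimensional harmonic sums that I will have arranged to be at most $1$ by the normalization; (ii) conclude by LP weak duality that the optimum of the covering LP — hence the size of any Steiner 2-TC-spanner, which is an integral feasible solution — is at least $\sum_{x \prec y} w_{x,y}$; (iii) lower-bound this total weight by the claimed $\Omega\big(m^d(\ln m-1)^d/(4\pi)^d\big)$, again a product over coordinates of a single harmonic-sum estimate. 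The main obstacle is genuinely step (i) in the presence of Steiner vertices: one must argue that a Steiner vertex $w$, despite not being a grid point, cannot be simultaneously "below" many high-weight targets and "above" many high-weight sources — this is where the order structure of the hypergrid (coordinatewise comparability, so $w$'s up-set and down-set are themselves boxes) is essential, and it is what forces the weights to decay like the reciprocal of the box volume. Once that is set up correctly, everything else is the routine harmonic-sum arithmetic that produces the explicit constants $\ln m - 1$ and $4\pi$.
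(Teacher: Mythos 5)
Your overall route (LP duality with pair weights inversely proportional to the box volume $\prod_i(y_i-x_i+1)$) is the paper's route, but your step (i) as stated contains a fatal gap. You propose to certify feasibility by checking, for every potential edge $e=(u,v)$, that $\sum_{(x,y)\,:\,e\text{ helps cover }(x,y)} w_{x,y}\le 1$, where $e$ helps cover $(x,y)$ if it can serve as either the first or the second edge of a 2-path for that pair. This condition is unachievable with any useful weights: already in $d=1$, for $u=1$ and $v=2$ the pairs that $(u,v)$ helps cover as a first edge include $(1,y)$ for all $y\ge 2$, so the load is $c\sum_{y\ge 2}\frac{1}{y}=\Theta(c\ln m)$, forcing $c=O(1/\ln m)$ and killing the $\log^d m$ gain. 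More structurally: applying your condition with $v$ an immediate successor of $u$ and summing over the $d$ successors shows $\sum_{y\succ u}w_{u,y}\le d$ for every $u$, so the total dual objective — hence the best lower bound your scheme can certify — is at most $d\,m^d$, strictly weaker than the claimed bound. The missing idea is that the dual of the correct LP does \emph{not} charge a pair's full weight to every edge that could help it; rather, for each pair $(u,v)$ and each candidate midpoint $w$, the weight $y_{uv}$ is split into a part $y'_{uwv}$ charged to the edge $(u,w)$ and a part $y''_{uwv}$ charged to $(w,v)$, subject to $y'_{uwv}+y''_{uwv}\ge y_{uv}$. The paper chooses the split proportionally to volumes, $\hat y'_{uwv}=\hat y_{uv}\cdot\frac{\mathrm{Vol}(w-u)}{\mathrm{Vol}(w-u)+\mathrm{Vol}(v-w)}$ (and symmetrically for $y''$), so that an edge $(u,v)$ receives from the pair $(u,y)$ only the damped amount $\frac{1}{\mathrm{Vol}(y-u)}\cdot\frac{\mathrm{Vol}(v-u)}{\mathrm{Vol}(v-u)+\mathrm{Vol}(y-v)}$; it is exactly this damping that makes the per-edge load converge to the constant $(4\pi)^d$ (via a sum-to-integral comparison and the substitution $x_i=\frac{1-t_i}{1+t_i}$ giving $\int_{-1}^1 dx/\sqrt{1-x^2}=\pi$ per coordinate). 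Without the split, the harmonic sums diverge and no normalization recovers the theorem.

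A secondary, non-fatal difference: you plan to fold Steiner vertices into the LP and handle them inside the dual feasibility check. The paper instead disposes of them up front with a purely combinatorial reduction (map each Steiner vertex to the coordinatewise maximum of its grid predecessors; this embeds any minimal Steiner spanner of the hypergrid back into the hypergrid without increasing the edge count), so the LP only ever quantifies over grid vertices. Your in-LP treatment could in principle work, but the reduction is cleaner and avoids having to reason about the up-sets and down-sets of off-grid points.
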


The proof of \thmref{grid-lb} constructs a dual solution to a linear
programming relaxation of the Steiner 2-TC-spanner problem. We consider an integer linear program for the sparsest 2-TC-spanner of $\calh_{m,d}$. Our program is a special case of a more general linear program for the sparsest directed $k$-spanner of an arbitrary graph $G$, used in \cite{tc-spanners-soda} to obtain an approximation algorithm for that problem. As explained in~\cite{tc-spanners-soda}, the general program has an integrality gap of $\Omega(n)$. However, we show that for our special case the integrality gap is small and, in particular, does not depend on $n$.
%%Sofya: explanation for the proof without explicitly writing out the dual (in case we want to revert).
%Specifically, we find a solution to the dual linear program by summing up the constraints of the original program with carefully selected coefficients. Our coefficients have a combinatorial interpretation: they are expressed in terms of the {\em volume} of $d$-dimensional {\em boxes} contained in $\calh_{m,d}$.  For example, the constraint that enforces the existence of a length-2 path from $u$ to $v$ in the 2-TC-spanner is initially assigned a coefficient which is inversely proportional to the number of nodes on the paths from $u$ to $v$. The final sum of the constraints is bounded by an integral which, in turn, is bounded by an expression depending only on the dimension $d$.
Specifically, we find a solution to the dual linear program by selecting initial values that have a combinatorial interpretation: they are expressed in terms of the {\em volume} of $d$-dimensional {\em boxes} contained in $\calh_{m,d}$.  For example, the dual variable corresponding to the constraint that enforces the existence of a length-2 path from $u$ to $v$ in the 2-TC-spanner is initially assigned a value inversely proportional to the number of nodes on the paths from $u$ to $v$. The final sum of the constraints is bounded by an integral which, in turn, is bounded by an expression depending only on the dimension $d$.

We note that the best lower bound known previously~\cite{bgjjrw2010} was proved by a long and sophisticated combinatorial argument that carefully balanced the number of edges that stay within different parts of the hypergrid and the number of edges that cross from one part to another. Our linear programming argument can be thought of as assigning types to edges based on the volume of the boxes they define, and automatically balancing the number of edges of different types by selecting the correct coefficients for the constraints corresponding to those edges.

\paragraph{Steiner TC-spanners of general $d$-dimensional posets.}
We continue the study of the number of edges in a sparsest Steiner $k$-TC-spanner of a poset as a function of its dimension, following %Atallah \etal~
\cite{atallah-journal} and %De Santis \etal~
\cite{SFM07}.
Observe that the only poset of dimension $1$ is the directed line $\calh_{n,1}$.
TC-spanners of the directed lines %and directed trees
were discovered under many different guises. They were studied implicitly in \cite{AlonSchieber87,AFB05,cha87,DGLRRS99,Yao82} and
explicitly in \cite{BTS94,thorup97}.
Alon and Schieber \cite{AlonSchieber87} implicitly showed that, for constant $k$,
the size of the sparsest $k$-TC-spanner of the directed line  is $\Theta(n\cdot \lambda_k(n)),$ where $\lambda_k(n)$ is the $\kth$-row
inverse Ackermann function.\footnote{The  {\em Ackermann function} \cite{a28} is defined by: $A(1, j)=2^j$, $A(i+1, 0)=A(i, 1), $ $A(i+1, j+1)=A(i, 2^{2^{A(i+1, j)}})$.
The inverse Ackermann function is $\alpha(n)=\min \{i: A(i, 1)\geq n\}$ and the $\ith$-row inverse is $\lambda_i(n)=\min \{j: A(i, j)\geq n\}$. Specifically, $\lambda_2(n)=\Theta(\log n)$, $\lambda_3(n)=\Theta(\loglog n)$ and $\lambda_4(n)=\Theta(\log^* n)$.}

Table~\ref{table:results} compares old and new results for $d\geq
2$. $S_k(G)$ denotes the number of edges in the sparsest Steiner $k$-TC-spanner of $G$. The upper bounds hold for all posets of dimension~$d$. The lower bounds mean that there is a poset of dimension $d$ for which every Steiner $k$-TC-spanner has the specified number of edges.

\begin{table}
\begin{tabular}{|l|l|}
  \hline
  % after \\: \hline or \cline{col1-col2} \cline{col3-col4} ...
  %$k$ & Bounds on $S_k(G)$ in \cite{atallah-journal} \\
  Stretch $k$ & Prior bounds on $S_k(G)$ \\
  \hline
  \hline
  $2d-1$ & $O(n^2)$\hfill  \cite{atallah-journal} \\
   $2d-2+t$ for $t\geq 2$ & $O(n (\log^{d-1} n) \lambda_{t} (n))$ \hfill  \cite{atallah-journal}\\
   $2d+O(\log^* n)$ & $O(n \log^{d-1} n)$ \hfill  \cite{atallah-journal}\\
   \hline
 \multirow{2}{*}{3} & $O(n\log^{d-1} n \log \log n)$ \\
   & \hfill for fixed $d$ \ \cite{SFM07}\\
  \hline
\end{tabular}
\hspace*{1mm}
\begin{tabular}{|l|l|l|}
  \hline
  % after \\: \hline or \cline{col1-col2} \cline{col3-col4} ...
  Stretch $k$ & \multicolumn{2}{|c|}{Our bounds on $S_k(G)$}\\
  \hline
  \hline
  \multirow{2}{*}{2} & $O(n \log^{d} n)$ & $\Omega \left (n \left (\frac{\log n}{cd} \right )^d \right)$ \\
  & \hfill for all $d$& \hfill for a fixed $c > 0$ and all  $d$ \\
  \hline
   \multirow{2}{*}{$\geq 3$} &   & $\Omega(n \log^{\lceil (d-1)/k \rceil} n)$ \\
   &   & \hfill for fixed $d$\\
  \hline
\end{tabular}
\vspace*{3mm}
\caption{\ifnum\llncs=0
The size of the sparsest Steiner $k$-TC-spanner for $d$-dimensional posets on $n$ vertices for $d\geq 2$
\else
Steiner $k$-TC-spanner sizes for $d$-dimensional posets on $n$ vertices for $d\geq 2$
\fi
}\label{table:results}\end{table}

Atallah \etal\ construct Steiner $k$-TC-spanners with $k$ proportional to
$d$.  De Santis \etal\ improved their construction for constant
$d$. They achieve $O(3^{d-t}nt\log^{d-1} n\log\log n)$ edges for odd stretch $k=2t+1$, where $t \in [d]$. In particular, setting $t=1$ gives $k = 3$ and $O(n \log^{d-1} n \log \log n)$
edges.
% This number of edges is an $O(\log \log n)$ factor worse than in the best construction of Atallah \etal, but the stretch is improved  from $\Theta(d)$ to $3$.

\iffalse
In contrast, we achieve $k=3$ with only $O(n (\log \log n)^d \log n)$ edges, a significant improvement over previous work for all $k \geq 3$.
\fi

 We present the first construction of Steiner 2-TC-spanners for
 $d$-dimensional posets. In our construction, the spanners have $O(n
 \log^d n)$ edges, and the length-$2$  paths  can be found in $O(d)$
 time. This result is stated in Theorem~\ref{thm:2tc-ub} (Sect.~\ref{sec:prelim}).
%Our results drastically improve the performance of the key management schemes for access hierarchies developed by Atallah \etal\ and De Santis \etal\ %no period
% We prove a lower bound that matches this construction to
%within $(\log\log n)^d$ factors when $d$ is constant.
Our construction takes as part of the input an explicit
 embedding of the poset
into a $d$-dimensional grid. (Finding such an embedding is NP-hard
\cite{yannakakis82}.) %in the worst-case.%
%Atallah claim without ref "even approximating it to within a constant factor is not known to be in P"
%Our Steiner $2$-TC-spanner construction works for all $d$.

Note that the Steiner vertices used in our construction for $d$-dimensional posets are necessary to obtain sparse TC-spanners. Recall our example of a bipartite graph $K_{\frac n2, \frac n 2}$ for which every 2-TC-spanner required $\Omega(n^2)$ edges. $K_{\frac n2, \frac n 2}$ is a poset of dimension $2$,  and thus, by the upper bound in Theorem~\ref{thm:2tc-ub}, has a Steiner $2$-TC-spanner of size $O(n\log^2 n)$. (As we mentioned before, for this graph there is an even better  Steiner $2$-TC-spanner with $O(n)$ edges.) To see that $K_{\frac n2, \frac n 2}$ is embeddable into a
$[n]\times [n]$ grid, map each of the $n/2$ left vertices of $K_{\frac n2, \frac n 2}$ to a distinct grid vertex in the set of incomparable vertices $\{ (i, n/2+1-i): i\in [n/2]\}$, and similarly map each right vertex to a distinct vertex  in the set $\{(n+1-i, i+n/2): i\in [n/2] \}$. It is easy to see that this
is a proper embedding.

%
%Moreover, our construction for $k=3$ uses fewer edges than all constructions presented by Attalah~\etal or those presented by Santis~\etal~\cite{Santisetal}.
Theorem~\ref{thm:grid-lb} implies
%While previous work did not provide any lower bounds, we demonstrate
that there is an absolute constant $c > 0$ for which
our upper bound for  $k = 2$  is  tight within an $O((cd)^d)$ factor%for every $d \leq c' \log n$
, showing that no drastic improvement in the upper
bound is possible.
To obtain a bound in terms of the number $n$ of vertices and dimension $d$, substitute $m^d$ with $n$ and $\ln m$ with $(\ln n)/d$ in the theorem statement. This gives the following corollary.
\begin{corollary}\label{cor:lb-steiner-2tc}
There is an absolute constant $c > 0$ for which for all
$d \geq 2$,
%For all constant $d\geq 2$,
there exists a $d$-dimensional poset $G$ on $n$ vertices
such that every Steiner $2$-TC-spanner
of $G$ has $\Omega\Big(n\big(\frac{\log n}{cd}\big)^d\Big)$ edges.
\end{corollary}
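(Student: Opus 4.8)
The plan is to use the directed hypergrid itself as the hard instance and to derive the bound from \thmref{grid-lb} by a change of variables. For $m\ge 2$ the poset $\calh_{m,d}$ has dimension exactly $d$ --- at most $d$ by definition, and at least $d$ because it contains $\{1,2\}^d$ as an induced subposet, which is the Boolean lattice on $d$ atoms and has dimension $d$ --- and it has $m^d$ vertices. So, given a target number of vertices $n$ and a dimension $d\ge 2$, I would set $m=\lfloor n^{1/d}\rfloor$ and take $G$ to consist of $\calh_{m,d}$ together with $n-m^d$ pairwise incomparable extra elements, used only to make the vertex count exactly $n$. Adding an antichain to a poset of dimension $d\ge 2$ leaves its dimension equal to $d$, and it cannot decrease the number of edges a Steiner $2$-TC-spanner needs: deleting the added vertices from any Steiner $2$-TC-spanner of $G$ yields a Steiner $2$-TC-spanner of $\calh_{m,d}$ with no more edges, since any length-$\le 2$ path between comparable grid vertices must avoid the added vertices (an added vertex is at infinite distance from every grid vertex). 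Applying \thmref{grid-lb} to $\calh_{m,d}$, every Steiner $2$-TC-spanner of $G$ then has $\Omega\!\big(m^d(\ln m-1)^d/(4\pi)^d\big)$ edges; writing $N=m^d$, so that $\ln m=(\ln N)/d$, this is $\Omega\!\big(N\,((\ln N)/d-1)^d/(4\pi)^d\big)$.

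The rest is arithmetic. Setting $t=(\ln N)/d$, a large enough absolute constant $c$ satisfies $c\ln 2\,(t-1)\ge 4\pi t$ for every $t\ge 2$, which is exactly
\[
\frac{(\ln N)/d-1}{4\pi}\ \ge\ \frac{\log N}{cd}
\]
(the base of the logarithm only rescales $c$); so, as long as $d\le(\ln N)/2$, the lower bound above is $\Omega\!\big(N\,(\log N/(cd))^d\big)$. Now the bound claimed in the corollary is super-linear only when $d<(\log n)/c$, and throughout that range $n^{1/d}$ is a large absolute constant, so $m=\lfloor n^{1/d}\rfloor$ has $m^d=\Theta(n)$, $\log(m^d)=\Theta(\log n)$, and $d\le(\ln N)/2$ with room to spare. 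Substituting $N=m^d=\Theta(n)$ therefore yields that every Steiner $2$-TC-spanner of $G$ has $\Omega\!\big(n\,(\log n/(cd))^d\big)$ edges, after at most a mild enlargement of $c$ to absorb the $\Theta(\cdot)$ constants and the floor-function slack between $\ln m$ and $(\ln n)/d$.

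Finally, for $d$ above the threshold $\Theta(\log n)$ the statement is essentially vacuous: there $\log n/(cd)<1$, so $n\,(\log n/(cd))^d<n$, while every Steiner $2$-TC-spanner of a $d$-dimensional poset with a least and a greatest element --- in particular of $\calh_{m,d}$ when $n\ge 2^d$, and of a suitable $d$-dimensional poset containing a spanning chain when $2^d>n$ (such posets exist for all $d\le n/2$ by Hiraguchi's bound) --- has at least $n-1$ edges, because every non-minimal vertex must have in-degree at least~$1$ to preserve reachability. Hence the large-$d$ case follows from this trivial bound. The only thing that needs genuine care in the whole proof is pinning down a single absolute constant $c$ that makes the displayed inequality valid uniformly in $d$ while also swallowing the (harmless) integrality losses; I expect that bookkeeping, rather than any conceptual point, to be the main --- indeed the only --- obstacle.
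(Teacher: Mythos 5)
Your proposal is correct and follows essentially the same route as the paper, which derives the corollary from \thmref{grid-lb} by the substitution $m^d\mapsto n$, $\ln m\mapsto(\ln n)/d$; you simply add the (reasonable) bookkeeping the paper omits: rounding $m=\lfloor n^{1/d}\rfloor$, padding with an antichain to hit exactly $n$ vertices, the explicit constant-chasing to turn $(\ln m-1)/(4\pi)$ into $\log n/(cd)$, and the trivial treatment of the regime where the claimed bound drops below $n$. One caveat: the intermediate assertion that $m^d=\Theta(n)$ throughout the range $d<(\log n)/c$ is literally false, since $m^d\ge n\bigl(1-n^{-1/d}\bigr)^d\approx n\,e^{-d\,n^{-1/d}}$ and $d\,n^{-1/d}$ is unbounded as $d$ approaches $(\log n)/c$; this does not damage the argument, however, because the loss is the $d$-th power of the per-coordinate factor $1-n^{-1/d}\ge 1-2^{-c}$, which gets absorbed into the constant $c$ inside $\bigl(\log n/(cd)\bigr)^d$ exactly as the final sentence of your write-up anticipates.
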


In addition, we prove a lower bound for all
constant $k>2$ and constant dimension $d$, which qualitatively matches
known upper bounds. It shows that, in particular, every Steiner 3-TC-spanner has size $\Omega(n\log n)$, and even with
significantly larger constant stretch,
every Steiner TC-spanner has size $n \log^{\Omega(d)} n$.
\begin{theorem}\label{thm:lb-steiner-2dim-k>2}
For all constant $d\geq 2$, there exists a $d$-dimensional poset $G$ on $n$ vertices
such that for all $k \geq 3$, every Steiner $k$-TC-spanner
of $G$ has $\Omega(n \log^{\lceil (d-1)/k \rceil} n)$ edges.
\end{theorem}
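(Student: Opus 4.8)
The plan is to exhibit, for each constant $d \geq 2$, a $d$-dimensional poset that forces any Steiner $k$-TC-spanner to be large, by reducing to the known hard instance for the directed line and composing it across coordinates. Recall Alon--Schieber: the directed line $\calh_{n,1}$ requires $\Omega(n \lambda_k(n))$ edges for a $k$-TC-spanner, and for $k=2$ this is $\Omega(n\log n)$. The natural candidate poset is a ``thin grid'': take $\call = [n^{1/(d-1)}]^{\,d-1} \times [N]$ where the last coordinate is long and the other $d-1$ coordinates are short, or more simply work directly with $\calh_{m,d}$ with suitable parameters, but the cleanest construction is a product of $\lceil (d-1)/k\rceil$-many directed lines embedded so that the $k$-TC-spanner is forced to simultaneously ``short-cut'' along several independent line directions. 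Concretely, I would let $G$ be (an order-preserving embedding into a $d$-dimensional grid of) the product poset $P_1 \times \cdots \times P_r$ where $r = \lceil (d-1)/k\rceil$, each $P_i$ is a directed path of length roughly $n^{1/r}$, times a ``padding'' antichain-like gadget that keeps the total vertex count at $\Theta(n)$ while keeping the dimension at $d$ (each factor $P_i$ contributes dimension $1$, and combining $r$ lines needs dimension $r \le d$; the remaining dimensions and the factor of $k$ in the exponent come from the argument below, not from the embedding).

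The core of the argument is a \emph{Steiner-removal / charging lemma}: in any Steiner $k$-TC-spanner $H$ of a product of directed lines, for each comparable pair $(u,v)$ the length-$\le k$ path through $H$ visits at most $k-1$ Steiner vertices, and by a pigeonhole/averaging argument over the $r$ coordinate directions, along at least one coordinate the path must make progress in ``one hop'' over a stretch-$k$ sub-walk. I would formalize this by projecting $H$ onto each of the $r$ long directions: a Steiner $k$-TC-spanner of the product induces, after contracting the other coordinates, a structure that behaves like a stretch-$k$ reachability shortcutting of a single directed line on $n^{1/r}$ vertices, but where each line must be shortcut with effective stretch $k$ per direction, forcing $\Omega(n^{1/r}\log n^{1/r})$ edges "per fiber" and $\Omega(n^{1/r})^{\,r-1} = n^{(r-1)/r}$ parallel fibers, multiplying to give $\Omega(n \cdot \lambda_2(n^{1/r})) = \Omega(n\log n)$ for $r$ directions each handled with stretch $2$; iterating the bound for general $k$ replaces $\log$ by $\log^{\lceil(d-1)/k\rceil}$ via the standard recursive structure of $k$-TC-spanners of the line (an edge at "level" $j$ of a $k$-shortcutting behaves like an edge at level $\lfloor j/k\rfloor$ after one round of contraction, which is exactly where the $\lceil(d-1)/k\rceil$ in the exponent originates). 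A convenient way to make all of this rigorous is to invoke the lower bound of \cite{bgjjrw2010} for constant-$m$ large-$d$ hypergrids as a black box for the single-direction building block, and then tensor it up.

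The main obstacle, and the step I expect to require the most care, is the Steiner-vertex accounting: unlike the non-Steiner case, a single Steiner vertex can lie on shortcut paths for many different pairs $(u,v)$ simultaneously and in different coordinate directions, so the naive pigeonhole over directions does not immediately partition the ``work'' that $H$ must do. I would handle this by a cut-based / entropy argument: fix a direction $i$ and a ``slab'' threshold $t$ in that coordinate; every comparable pair straddling $t$ in coordinate $i$ must have its length-$\le k$ $H$-path cross the slab boundary, and one can bound the number of (Steiner-inclusive) edges crossing a random slab from below by counting how many pairs each crossing edge can serve --- this is where the $\log$ factor per direction is extracted, following the information-theoretic version of the Alon--Schieber lower bound, which is robust to the presence of Steiner vertices because Steiner vertices add connectivity but cannot reduce the number of distinct ``crossing events'' that must be covered. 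Summing the crossing-edge lower bounds over the $\Theta(\log n^{1/r})$ dyadic scales in each of the $r$ directions, and noting an edge is charged in at most $O(k) = O(1)$ scales, yields the claimed $\Omega(n\log^{\lceil(d-1)/k\rceil} n)$ bound; finally I would verify the embedding into a $d$-dimensional grid is order-preserving and that $|V(G)| = \Theta(n)$, both of which are routine once the factor structure is fixed.
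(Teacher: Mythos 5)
There is a genuine gap, and it is in the very first step: the choice of hard instance. Your candidate poset is (up to padding) a product of directed lines, i.e., a hypergrid $\calh_{m,r}$ with $m=n^{1/r}$ and $r=\lceil (d-1)/k\rceil$. But hypergrids provably do \emph{not} satisfy the claimed lower bound for $k\ge 3$: as noted in Section~\ref{sec:results}, $\calh_{m,d}$ admits a $3$-TC-spanner of size $O((m\loglog m)^d)$ and, more generally, a $k$-TC-spanner of size $O((m\cdot\lambda_k(m))^d)$. For your product of $r$ lines this gives a $k$-TC-spanner with $O(n\cdot(\lambda_k(n^{1/r}))^r)=n\cdot\mathrm{polyloglog}(n)$ edges, far below $n\log^{\lceil(d-1)/k\rceil}n$. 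The same observation defeats the ``$\Omega(n^{1/r}\log n^{1/r})$ per fiber'' step: the Alon--Schieber bound for the line at stretch $k\ge 3$ is only $\Theta(n\lambda_k(n))$, and nothing forces the spanner to handle each direction ``with stretch $2$'' --- a length-$\le k$ path is free to distribute its hops across directions, which is exactly how the grid upper bound beats $\log$ per direction. Your slab-crossing/charging argument, applied to a grid, can therefore only recover inverse-Ackermann-type factors, not $\log^{\lceil(d-1)/k\rceil}n$.

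The paper's construction is necessarily different: the hard poset is \emph{random}. One takes $n$ elements with distinct first coordinates and the remaining $d-1$ coordinates chosen independently and uniformly from $[n]$, so that consecutive elements land in unrelated cells of every dyadic box partition $\BxP(\ivec)$, $\ivec\in[\lfloor\log n/(d-1)\rfloor]^{d-1}$. Each partition generates $\Omega(n)$ ``jumps'' (consecutive elements lying in adjacent odd/even box pairs), giving $\Omega(n\log^{d-1}n)$ jumps in total. The exponent $d'=\lceil(d-1)/k\rceil$ then comes from a pigeonhole on the $\le k$-edge spanner path for a jump: the parity vector of the boxes must flip in all $d-1$ coordinates along the path, so some single edge flips it in at least $d'$ coordinates; fixing those $d'$ coordinates determines $d'$ of the scale indices $i_t$, so each spanner edge can be charged by only $O(\log^{d-1-d'}n)$ partitions (and at most one jump per partition), yielding $\Omega(n\log^{d'}n)$. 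Steiner vertices are handled not by an entropy argument but by Lemma~\ref{lem:steiner-points-do-not-help}, which embeds them back into the hypergrid so that the box/parity bookkeeping applies to them as well. To repair your proposal you would have to abandon the product-of-lines instance and introduce some such randomness (or another explicit poset with no low-dimensional product structure), together with the per-edge pigeonhole over the $k$ hops; the tensoring and contraction steps as written cannot be salvaged.
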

\noindent This theorem (proved in Section~\ref{sec:lb-posetk>2}) greatly
improves upon the previous $\Omega(n \log \log n)$ bound, which follows
trivially from known lower bounds for a $3$-TC-Spanner of a directed line.

The lower bound on the size of a Steiner $k$-TC-spanner for $k\geq 3$ is proved by the probabilistic method. We observe that using the hypergrid as an example of a poset with large Steiner $k$-TC-spanners for $k>2$ would yield a much weaker lower bound because it is known that $\calh_{m,d}$ has a 3-TC-spanner of size $O((m\loglog m)^d)$ and, more generally, a $k$-TC-spanner of size $O((m\cdot \lambda_k(m))^d),$ where $\lambda_k(m)$ is the $\kth$-row inverse Ackermann function \cite{bgjjrw2010}.
Instead, we construct an $n$-element poset embedded in $\calh_{n,d}$ using the following randomized procedure: all poset elements differ on coordinates in dimension 1, and for each element, the remaining $d-1$ coordinates are chosen uniformly at random from $[n]$. We consider a set of partitions of the underlying hypergrid into $d$-dimensional boxes, and carefully count the expected number of edges in a Steiner $k$-TC-spanner that cross box boundaries for each partition. Then we show that each edge was counted only a small number of times, proving that the expected number of edges in a Steiner $k$-TC-spanner is large. We conclude that some poset attains the expected number of edges.
%proceeds by carefully counting edges in stages.

\paragraph{Organization.} We explain applications of Steiner TC-spanners in Section~\ref{sec:applications}. Section~\ref{sec:prelim} gives basic definitions and observations. In particular, our construction of sparse Steiner 2-TC-spanners for $d$-dimensional posets (the proof of Theorem~\ref{thm:2tc-ub}) is presented there.
Our lower bounds are the technically hardest part of this paper. The lower bound for the hypergrid for $k=2$ (Theorem~\ref{thm:grid-lb}) is proved in Section~\ref{sec:grid-lower-bounds}. The lower bound for $k>2$ (Theorem~\ref{thm:lb-steiner-2dim-k>2}) is presented in Section~\ref{sec:lb-posetk>2}.

\subsection{Applications}\label{sec:applications}
Numerous applications of TC-spanners are surveyed in~\cite{Ras-survey10}. We focus on two of them: property reconstruction, described in \cite{bgjjrw2010,lipschitz}, and key management for access control hierarchies, described in \cite{ABF06,AFB05,SFM07,atallah-journal,tc-spanners-soda}.

\paragraph{Property Reconstruction.}
%The model of property-preserving data reconstruction was introduced in~\cite{AilonCCL08} and studied in~\cite{AilonCCL08,SakS08,bgjjrw2010}
Property-preserving data reconstruction was introduced by Ailon, Chazelle, Comandur and Liu~\cite{AilonCCL08}. In this model, a reconstruction algorithm, called a {\em filter}, sits between a {\em client} and a {\em dataset}. A dataset is viewed as a function $f:\mathcal{D} \rightarrow \mathcal{R}$. Client accesses the dataset using {\em queries} of the form $x \in \mathcal{D}$ to the filter. The filter {\em looks up} a small number of values in the dataset and outputs $g(x)$, where $g$ must satisfy some fixed {\em structural} property (e.g., be monotone or have a low Lipschitz constant) and differ from $f$ as little as possible. Extending this notion, Saks and Seshadhri~\cite{SakS08} defined  {\em local} reconstruction.  A filter is {\em local} if it allows for a local (or distributed) implementation: namely, if the output function $g$ does not depend on the order of the queries.

Our results on TC-spanners are relevant to reconstruction of two properties of functions: monotonicity and having a low Lipshitz constant. Reconstruction of monotone functions was considered in~\cite{AilonCCL08,SakS08,bgjjrw2010}. A function $f: [m]^d \rightarrow \mathbb{R}$
is called {\em monotone} if $f(x) \leq f(y)$ for all $(x,y)\in E(\calh_{m,d})$. Reconstruction of functions with low Lipschitz constant was studied in \cite{lipschitz}. A function $f:[m]^d\to\mathbb{R}$ has Lipschitz constant $c$ if $|f(x)-f(y)|\leq c \cdot |x-y|_1$.
In \cite{bgjjrw2010}, the authors proved that the existence of a local filter for monotonicity of functions with low lookup complexity %per query
implies the existence of a sparse 2-TC-spanner of~$\calh_{m,d}$. In \cite{lipschitz}, an analogous connection is drawn between local reconstruction of functions with  low Lipschitz constant and 2-TC-spanners. Our improvement in the lower bound on the size of 2-TC-spanners of $\calh_{m,d}$ directly translates into improvement by the same factor in the lower bounds on lookup complexity of local filters for these two properties.

\paragraph{Key Management for Access Control Hierarchies.}
Atallah~\etal~\cite{AFB05} used \snote{Maybe say somewhere that they do not abstract out the combinatorial object they use} sparse Steiner TC-spanners to construct efficient key management schemes for
access control hierarchies.   An {\em access
hierarchy} is a partially ordered set $G$ of access classes.
%This is modeled by a directed graph $G$ whose nodes, labeled $1,\dots,n$,  are access classes and whose edges indicate the ordering.
Each user is entitled to
access a certain class and all classes reachable from the corresponding node in $G$.  One
approach to enforcing the access hierarchy is to use a key management
scheme of the following form \cite{ABF06,AFB05,SFM07,atallah-journal}.  Each edge $(i,j)$ has an associated public key $P(i,j)$, and each node $i$, an associated secret key $k_i$.  Only users with
the secret key for a node have the required permissions for the associated access class.  The public
and secret keys are designed so that there is an efficient algorithm $A$ which takes $k_i$ and
$P(i,j)$ and generates $k_j$, but for each $(i,j)$ in $G$, it is computationally hard to generate
$k_j$ without knowledge of $k_i.$  Thus, a user can efficiently generate the required keys to access a
descendant class, but not other classes.  The number of runs of algorithm $A$ needed to generate
a secret key $k_v$ from a secret key $k_u$ is equal to $d_G(u,v)$.  To speed this up, Atallah~\etal~\cite{atallah-journal} suggest adding edges and nodes to $G$ to increase
connectivity. To preserve the access hierarchy represented by $G$, the new graph $H$ must be a Steiner TC-spanner of $G$. The number of edges in $H$ corresponds to
the space complexity of the scheme, while the stretch $k$ of the spanner corresponds to the time
complexity.

We note that the time to find the path from $u$ to $v$ is also important in this application. In our upper bounds, this time is $O(d)$, which for small $d$ (e.g., constant) is likely to be much less than $2g(n)$ or $3g(n)$, where $g(n)$ is the time to run algorithm $A$. This is because algorithm $A$ involves the evaluation of a cryptographic hash function, which is expensive in practice and in theory\footnote{Any hash function which is secure against $\poly(n)$-time adversaries requires $g(n) \geq \polylog n$ evaluation time under existing number-theoretic assumptions.}.

\section{Definitions and Observations}\label{sec:prelim}
%Sofya: defined in a footnote
%For positive integer $m$, we denote $\{1,\dots, m\}$ by $[m]$.
For integers $j \geq i$, an interval $[i,j]$ refers to the set $\{i, i+1,\dots,j\}$.
Unless otherwise specified, logs are always base $2$, except for $\ln$ which is the natural logarithm.

Each DAG $G=(V,E)$ is equivalent to a poset with elements $V$ and partial order $\preceq$, where $x\preceq y$ if $y$ is reachable from $x$ in $G$. Elements $x$ and $y$ are {\em comparable} if $x \preceq y$ or $y\preceq x$, and {\em incomparable} otherwise.
We write $x \prec y$ if $x \preceq y$ and $x \neq y$.
The {\em hypergrid $\calh_{m,d}$ with dimension $d$ and side length $m$} was defined in the beginning of \secref{results}. Equivalently, it is the poset on
elements $[m]^d$ with the {\em dominance order}, defined
as follows: $x \preceq y$ for two elements
$x,y\in [m]^d$  iff $x_i \leq y_i$ for all $i\in [d]$.

A mapping from a poset $G$ to a poset $G'$ is called an {\em embedding} if it respects the partial
order, that is, all $x,y\in G$ are mapped to $x',y'\in G'$ such that $x \preceq_G y$ iff $x'
\preceq_{G'} y'$.
\begin{definition}[Poset dimension (\cite{DushnikMiller})]\label{def:poset-dimension}
Let $G$ be a poset with $n$ elements.  The {\em dimension} of $G$ is the smallest integer $d$ such that
$G$ can be embedded into the hypergrid ${\cal H}_{n,d}$.
\end{definition}
\noindent
Dushnik and Miller \cite{dush-miller} proved
that for any $m>1$, the hypergrid $\calh_{m,d}$ has dimension exactly
$d$.

\begin{fact}\label{fact:embedding-into-grid}
Each $d$-dimensional poset with $n$ elements can be embedded into a hypergrid ${\cal H}_{n,d}$, so that for all $i\in [d]$, the $i$th coordinates of images of all points are distinct.
\end{fact}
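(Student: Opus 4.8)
The plan is to start from an arbitrary order-preserving embedding $\phi=(\phi_1,\dots,\phi_d)\colon G\to\calh_{n,d}$, whose existence is exactly \defref{poset-dimension}, and to perturb each of the $d$ coordinate maps so that all $n$ images become distinct in every coordinate while the map stays an embedding. The only thing standing in the way is that several elements of $G$ may receive the same value $\phi_i(\cdot)$ in some coordinate $i$, so I would break these ties carefully.

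Concretely, fix a coordinate $i$. Partition $G$ into the level sets of $\phi_i$ (the classes of elements sharing a common $\phi_i$-value), and on each level set choose a linear extension of $\preceq_G$ restricted to that set — every finite poset has one. Reading the level sets in increasing order of their $\phi_i$-value and, inside each, in the order of its chosen linear extension, yields a total order on all $n$ elements of $G$; define $\psi_i(x)\in[n]$ to be the rank of $x$ in that total order. Doing this independently for each $i$ gives a map $\psi=(\psi_1,\dots,\psi_d)\colon G\to[n]^d$ in which each $\psi_i$ is a bijection onto $[n]$, so the $i$th coordinates of the images are automatically all distinct, for every $i$. It then remains to check $\psi$ is still an embedding, for which I would first record two easy facts about each $\psi_i$: $\phi_i(x)<\phi_i(y)$ forces $\psi_i(x)<\psi_i(y)$ (earlier level set), and $x\preceq_G y$ together with $\phi_i(x)=\phi_i(y)$ forces $\psi_i(x)\le\psi_i(y)$ (same level set, and its linear extension refines $\preceq_G$).

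Granting these, the verification is routine. For the forward direction, $x\preceq_G y$ gives $\phi_i(x)\le\phi_i(y)$ for every $i$, hence $\psi_i(x)\le\psi_i(y)$ for every $i$, i.e.\ $\psi(x)\preceq\psi(y)$. For the converse, suppose $x\not\preceq_G y$; then $x\neq y$ and either $x,y$ are incomparable in $G$ — in which case $\phi$ being an embedding gives some $i$ with $\phi_i(x)>\phi_i(y)$, hence $\psi_i(x)>\psi_i(y)$ — or $y\prec_G x$, in which case $\phi_i(y)\le\phi_i(x)$ for all $i$ with at least one strict inequality (here I use that $\phi$ is injective, since $\phi(x)=\phi(y)$ would force $x\preceq_G y$ and $y\preceq_G x$), again giving some $i$ with $\psi_i(x)>\psi_i(y)$; either way $\psi(x)\not\preceq\psi(y)$. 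Hence $x\preceq_G y\iff\psi(x)\preceq\psi(y)$.

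I do not expect a genuine obstacle here; the single point needing care is that the tie-breaking must not manufacture a dominance relation between two $G$-incomparable elements, which is precisely why ties within a level set are resolved consistently with $\preceq_G$ rather than arbitrarily. An alternative route is to quote the Dushnik--Miller characterization of poset dimension as the least size of a realizer by linear extensions: if $L_1,\dots,L_d$ are linear extensions of $\preceq_G$ with $\bigcap_i L_i=\preceq_G$, then setting $\psi_i(x)$ to be the position of $x$ in $L_i$ gives the distinctness and embedding properties immediately; this is essentially the same argument repackaged.
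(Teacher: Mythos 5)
Your proof is correct. The paper states this as a Fact without giving a proof, so there is nothing to compare against; your tie-breaking argument (equivalently, the Dushnik--Miller realizer formulation you mention at the end) is the standard and complete justification, and both the forward and converse directions are verified soundly. One small remark: the step that the tie-breaking actually protects is the \emph{forward} direction (a pair $x\preceq_G y$ with $\phi_i(x)=\phi_i(y)$ could otherwise end up with $\psi_i(x)>\psi_i(y)$); incomparable pairs are already safe under arbitrary tie-breaking, since injectivity of $\phi$ guarantees strict inequalities in both directions across the coordinates, and strict inequalities survive any reordering within level sets.
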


%%Sofya: used this example in the intro.
%\paragraph{Non-Steiner TC-spanners versus Steiner TC-spanners.}
%In \cite{tc-spanners-soda}, the authors showed that for planar posets (and, more generally, families
%of minor-free posets), there exist non-Steiner $k$-TC-spanners of near linear size. Such spanners do not exist for general low-dimensional posets, which motivates the study of Steiner $k$-TC-spanners. This is illustrated by the following example:
%
%\begin{example}
%Consider the complete bipartite graph $K_{n,n}$, containing $n^2$ edges, with all edges oriented away from the
%same part.  It is clear that any non-Steiner $k$-TC-spanner for $K_{n,n}$ has to have all the $n^2$
%edges.  However, $K_{n,n}$ is a poset of dimension $2$  and by our upper bounds, a Steiner $2$-TC
%spanner for it has size $O(n\log^2 n)$. Indeed, to see that $K_{n, n}$ is embeddable into a
%$[2n]\times [2n]$ grid, first denote by the vertex labeled $(0,0)$ the minimum grid element.
%Then each of the $n$ left vertices of $K_{n,n}$  is mapped to a distinct grid vertex in  the set of
%incomparable vertices $\{ (i, n-1-i): 0\leq i\leq n-1\}$, and similarly  each right vertex is mapped
%to a distinct vertex  in the set $\{(2n-1-i, i+n): 0\leq i\leq n-1 \}$. It is easy to see that this
%is a proper embedding. Actually, in this case an even better $O(n)$-sized Steiner $2$-TC spanner
%can be obtained by adding a single Steiner point at $(n,n)$ and adding new edges from the left
%vertices to it, and from it to the right vertices.
%\end{example}

\paragraph{Sparse Steiner 2-TC-spanners for $d$-dimensional posets.} %\label{sec:up-bds-ctdim-pos}
We give a simple construction of sparse Steiner 2-TC-spanners for $d$-dimensional posets.
For constant $d$, it matches the lower bound from Section~\ref{sec:grid-lower-bounds} up to a constant factor. Note that the construction itself works for arbitrary, not necessary constant, $d$.

\begin{theorem}\label{thm:2tc-ub}
Every $d$-dimensional poset $G$ on $n$ elements has a Steiner $2$-TC-spanner $H$
of size $O(n \log^d n)$ that can be constructed in time $O(dn \log^d n)$. Moreover, for all $x,y\in G$, where $x\prec y$, one can find a path in $H$ from $x$ to $y$ of length at most 2 in time $O(d)$.
\end{theorem}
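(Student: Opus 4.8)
The plan is to build $H$ as a $d$-fold product of the standard segment-tree $2$-TC-spanner of the directed line. First I would invoke \factref{embedding-into-grid} to fix an order-preserving embedding $\phi$ of $G$ into $\calh_{n,d}$ whose images $P=\phi(V)\subseteq[n]^d$ have pairwise distinct coordinates in every dimension, and (at the cost of doubling $n$) assume $n$ is a power of $2$. Call $I\subseteq[n]$ \emph{dyadic} if $I=[a2^\ell+1,(a+1)2^\ell]$ for integers $\ell,a\ge 0$, and for $|I|\ge 2$ let $L(I),R(I)$ denote its left and right halves, so that $\min R(I)=\max L(I)+1$. The key counting fact is that the dyadic intervals of a fixed length $2^\ell$ partition $[n]$, so every $t\in[n]$ lies in at most $\log n$ dyadic intervals of length $\ge 2$, and hence in $L(I)$ for at most $\log n$ of them.

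The spanner $H$ has vertex set $P$ together with one Steiner vertex $z_{I_1,\dots,I_d}$ for each $d$-tuple of length-$\ge 2$ dyadic intervals of $[n]$ (only tuples that end up with both an incoming and an outgoing edge need to be retained). Its edges are: $x\to z_{I_1,\dots,I_d}$ whenever $x\in P$ and $x_i\in L(I_i)$ for all $i\in[d]$; and $z_{I_1,\dots,I_d}\to y$ whenever $y\in P$ and $y_i\in R(I_i)$ for all $i\in[d]$. No edges of $G$ itself are added and there are no Steiner-to-Steiner edges, so every directed path in $H$ has the form $x\to z_{\vec I}\to y$ with $x,y\in P$.

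To see that $H$ is a Steiner $2$-TC-spanner, assign to $z_{I_1,\dots,I_d}$ the point of $\mathbb{R}^d$ whose $i$th coordinate is $\max L(I_i)+\tfrac12$; then every edge of $H$ strictly increases all $d$ coordinates, so $u$ reaches $v$ in $H$ only when $u\preceq v$ in the dominance order. In particular, if $d_G(x,y)=\infty$ then $\phi(x)\not\preceq\phi(y)$ and hence $d_H(x,y)=\infty$. Conversely, if $x\prec_G y$ then, since $\phi$ is order-preserving and coordinates are distinct, $x_i<y_i$ for every $i$; letting $I_i$ be the shortest dyadic interval containing $\{x_i,y_i\}$, minimality forces $x_i$ and $y_i$ into different halves of $I_i$, and $x_i<y_i$ then gives $x_i\in L(I_i)$ and $y_i\in R(I_i)$, so $x\to z_{I_1,\dots,I_d}\to y$ is a length-$2$ path in $H$.

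For the size bound, each out-edge $x\to z_{\vec I}$ is determined by $x$ and, for each coordinate, a dyadic interval with $x_i\in L(I_i)$, of which there are at most $\log n$; hence $P$ has at most $n\log^d n$ out-edges, symmetrically at most $n\log^d n$ in-edges, and at most $n\log^d n$ Steiner vertices are used, giving $|E_H|=O(n\log^d n)$ (in the original $n$ as well, since doubling $n$ only changes $\log n$ by an additive constant). Enumerating, for each $x\in P$, its $\le\log^d n$ relevant tuples (each produced in $O(d)$ time) builds $H$ in time $O(dn\log^d n)$. Finally, given $x\prec y$, the length-$2$ path $x\to z_{I_1,\dots,I_d}\to y$ is found by computing each $I_i$, namely the dyadic interval at the level of the most significant bit in which $x_i-1$ and $y_i-1$ differ, in $O(1)$ operations per coordinate, i.e.\ $O(d)$ total. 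I do not expect a genuine obstacle here: the construction is a direct product and the estimates are straightforward counting. The only point that needs care is verifying that routing many pairs through shared Steiner vertices never connects two incomparable elements of $G$, which is exactly what the coordinate-placement argument above rules out.
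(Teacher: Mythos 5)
Your construction is correct and is essentially the paper's own: your Steiner vertex $z_{I_1,\dots,I_d}$ for the tuple of shortest common dyadic intervals is exactly the paper's intermediate grid point $(lcp(x_1,y_1),\dots,lcp(x_d,y_d))$, with "$x_i$ in the left half of $I_i$" matching the paper's condition $x\prec p$ for $p=(p_{i_1}(x_1),\dots,p_{i_d}(x_d))$, and the counting, timing, and $O(d)$ path-lookup arguments coincide. The only (immaterial) difference is that you place the Steiner vertices at half-integer coordinates rather than at actual hypergrid points.
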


\begin{myproof}
Consider an $n$-element poset $G$ embedded into the hypergrid $\calh_{n,d}$, so that for all $i\in [d]$, the $i$th coordinates of images of all points are distinct. (See Fact~\ref{fact:embedding-into-grid}).
 In this proof, assume that the hypergrid coordinates start with 0, i.e., its vertex set is $[0, n - 1]^d$. Let $\ell=\lceil \log n \rceil$ and $b(t)$ be the $\ell$-bit binary representation of $t$, possibly with leading zeros.
 Let $p_i(t)$ denote the $i$-bit prefix of $b(t)$ followed by a single 1 and then $\ell - i - 1$ zeros.
 Let $lcp(t_1, t_2)=p_i(t_1)$, where $i$ is the length of the longest common prefix of $b(t_1)$ and $b(t_2)$.

  To construct a Steiner $2$-TC-spanner $(V_H,E_H)$ of $G$, we insert at most $\ell^d$ edges into $E_H$ per each poset element. Consider a poset element with coordinates $x = (x_1, \dots, x_d)$ in the embedding.
  For each $d$-tuple $(i_1, \dots, i_d) \in [0,\ell - 1]^d$, let $p$ be a hypergrid vertex whose coordinates have binary representations $(p_{i_1}(x_1), \dots, p_{i_d}(x_d))$.
  If $x \prec p$, we add an edge $(x,p)$ to $E_H$; otherwise, if $p \prec x$ we add an edge $(p,x)$ to $E_H$. Note that only edges between comparable points are added to $E_H$.

We have that $E_H$ contains $O(n (\lceil \log n \rceil)^d)$ edges. If $d = O(\log n)$, then
$(\lceil \log n \rceil )^d \leq (\log n + 1)^d = (\log^d n)(1 + 1/\log n)^d = O(\log^d n)$, using the well-known inequality that
$1+x \leq e^{x}$. On the other hand, if $d = \Omega(\log n)$, the bound of this theorem holds trivially.
Hence, $E_H$ contains $O(n \log^d n)$ edges.\snote{With the current write up, we have to add ceilings to $\log n$ here; can we avoid it? No, O() does not take care of the ceilings because the whole thing is raised to power $d$}
 It can be constructed in $O(dn \log^d n)$ time, as described, if bit operations on coordinates can be performed in $O(1)$ time.
 \begin{comment}
  draw one edge from $x$ to a vertex on the grid with coordinates
  $(c_12^{i_1}, \dots, c_d2^{i_d})$, such that for all $k$ we have $x_k < c_k2^{i_k}$ and $c_k$ is minimal and another edge to a vertex with coordinates
  $(c'_12^{i_1}, \dots, c'_d2^{i_d})$, such that for all $k$ we have $x_k \ge c'_k2^{i_k}$ and $c'_k$ is maximal.
  If the other end of the edge falls outside of the considered grid $\calh_{n,d}$, we don't draw it.
  Clearly, there will be $O(n\log^d n)$ edges in this construction.
\end{comment}

   For all pairs of poset elements $x = (x_1, \dots, x_d)$ and $y = (y_1, \dots, y_d)$, such that $x \prec y$, there is an intermediate point
   $z$ with coordinates whose binary representations are $(lcp(x_1, y_1), \dots, lcp(x_d, y_d))$. By construction, both edges $(x, z)$ and $(z, y)$ are in $E_H$.
   Point $z$ can be found in $O(d)$ time, since $lcp(x_i, y_i)$ can be computed in $O(1)$ time, assuming $O(1)$ time bit operations on coordinates.
\begin{comment}
   we can find an intermediate point $z = (z_1, \dots, z_d)$, such that both edges $(x,z)$ and $(z,y)$ exist in the constructed $2$-TC-spanner.
   For each of $d$ coordinates we find $z_i$ separately in $O(1)$ time, assuming that we can do bit operations in $O(1)$ time.
  Suppose $x_i$ and $y_i$ are expressed as $a_1 \dots a_\ell$ and $b_1 \dots b_\ell$ in binary notation, then using bit operations we can find a binary number $z_i = c_1 \dots c_k10\dots0$ of length $\ell$, such that $c_1 \dots c_k$ is the longest common prefix of $a_1 \dots a_\ell$ and $b_1 \dots b_\ell$.\gnote{The case of $k = \ell$ should be considered separately.} Note that $z_i$ is the minimal number, divisible by $2^{\ell - k - 1}$, which is greater than $x_i$ and also maximal number, divisible by $2^{\ell - k - 1}$, which is less than or equal to $y_i$.
  Using this observation for every coordinate we conclude that both edges $(x, z)$ and $(z,y)$ are present in the graph.
\end{comment}
\end{myproof}

\paragraph{Equivalence of Steiner and non-Steiner TC-spanners for hypergrids.} Our lower bound on the size of $2$-TC-spanners for $d$-dimensional posets of size $n$ is obtained by
proving a lower bound on the size of the Steiner $2$-TC-spanner of $\calh_{m,d}$
where $m = n^{1/d}$.  The following lemma, used in Section~\ref{sec:>2-dim-poset-lb}, implies \corref{steiner-points-do-not-help}  that shows that sparsest Steiner
and non-Steiner $2$-TC-spanners of $\calh_{m,d}$ have the same size.

\begin{lemma}\label{lem:steiner-points-do-not-help}
Let $G$ be a poset on elements $V\subseteq[m]^d$ with the dominance order and $H=(V_H, E_H)$ be
a Steiner $k$-TC-spanner of $G$ with minimal $V_H$. Then $H$ can be embedded into ${\cal H}_{m,d}$.
\end{lemma}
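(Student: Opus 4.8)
The plan is to extend the inclusion $V\hookrightarrow[m]^d$ to a map $\phi\colon V_H\to[m]^d$ that is monotone along every edge of $H$; this is exactly what it means to embed $H$ into $\calh_{m,d}$. For $x\in V_H$ write $R(x)=\{v\in V:\ d_H(x,v)<\infty\}$ for the set of poset elements reachable from $x$ in $H$. I would first use minimality of $V_H$ to observe that $R(x)\neq\emptyset$ for every $x\in V_H$: this is clear for $x\in V$ (then $x\in R(x)$), and for a Steiner vertex $s$, if $R(s)=\emptyset$ then $s$ lies on no directed path in $H$ between two vertices of $V$ (such a path through $s$ would contain a subpath from $s$ to a vertex of $V$), so deleting $s$ and its incident edges leaves every distance $d_H(u,v)$ with $u,v\in V$ unchanged and cannot create new reachability; hence $H-s$ would still be a Steiner $k$-TC-spanner of $G$, contradicting minimality.

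Next I would define $\phi$ coordinatewise by $\phi_i(x)=\min\{v_i:\ v\in R(x)\}$ for $i\in[d]$ (each minimum is over a nonempty set) and check the two properties of an embedding. Property (a): $\phi$ fixes $V$. For $v\in V$, $v\in R(v)$ gives $\phi_i(v)\le v_i$; and any $w\in R(v)$ has $d_H(v,w)<\infty$, hence $d_G(v,w)<\infty$ by the Steiner $k$-TC-spanner property, hence $v\preceq w$ in the dominance order, so $v_i\le w_i$; thus $\phi_i(v)=v_i$ and $\phi(v)=v$. Property (b): $\phi$ is monotone along $H$. If $d_H(x,y)<\infty$ then $R(y)\subseteq R(x)$ (concatenate a path from $x$ to $y$ with a path from $y$ to $v$, for each $v\in R(y)$), and since both sets are nonempty, $\phi_i(x)=\min_{v\in R(x)}v_i\le\min_{v\in R(y)}v_i=\phi_i(y)$ for every $i$, i.e.\ $\phi(x)\preceq\phi(y)$ in $\calh_{m,d}$. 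Applying (b) to the endpoints of each edge of $E_H$ shows that $\phi$ sends every edge of $H$ to a comparable pair of $\calh_{m,d}$; together with (a) this exhibits $H$ inside $\calh_{m,d}$, as required.

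The only real decision in the argument is how to place the Steiner vertices, and that is the step I would worry about. Naive choices --- for instance placing a Steiner vertex at the coordinatewise maximum of the $V$-vertices that reach it --- tend to be monotone in one direction only and require a tedious case analysis. The ``downward'' formula $\phi_i(x)=\min\{v_i:v\in R(x)\}$ avoids this because it is literally the canonical monotone extension to $V_H$ of the $i$-th coordinate function on $V$: monotonicity then follows from the single inclusion $R(y)\subseteq R(x)$, and the one thing that could break the definition, an empty $R(\cdot)$, is exactly what minimality of $V_H$ rules out. Note that $\phi$ need not be injective (Steiner vertices with the same reach-set get identified), but this is harmless: in the application $V=[m]^d$, the image of $H$ under $\phi$ is a subgraph of the transitive closure of $\calh_{m,d}$ on vertex set $[m]^d$ with at most $|E_H|$ edges, which is precisely what \corref{steiner-points-do-not-help} (Steiner vertices do not help for hypergrids) needs.
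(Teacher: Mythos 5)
Your proof is correct and is, up to order duality, essentially the same as the paper's: the paper maps each Steiner vertex $s$ to the coordinatewise \emph{maximum} of $Prev(s)=\{x\in V: x\prec s\}$, using minimality of $V_H$ to guarantee $Prev(s)\neq\varnothing$, whereas you map each vertex to the coordinatewise \emph{minimum} of the $V$-vertices reachable from it, using minimality to guarantee that reachability set is nonempty. The edge-by-edge monotonicity check and the observation that non-injectivity is harmless mirror the paper's argument exactly, so there is no substantive difference.
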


\begin{myproof}
For each $s\in V_H-V$, we define $Prev(s)=\{x\in V:~x\prec s\}$.
If $Prev(s)=\varnothing$ then $V_H$ is not minimal because $H$ remains a Steiner
$k$-TC-spanner of $G$ when $s$ is removed.  We map each Steiner vertex $s$ to $r(s)$,
the replacement of $s$ in $[m]^d$, whose $i$th coordinates for all $i\in[d]$ are $\max_{x\in Prev(s)} x_i$.

Consider an edge $(x,y)$ in $G$.  If $x,y\in V$ our embedding does
not alter that edge.  If $x\in V$, $y\in V_H-V$ then $x\in Prev(y)$
and $x\prec r(y)$ by the definition of $r$.  If $x,y\in V_H-V$
then $Prev(x)\subseteq Prev(y)$ and the monotonicity of $\max(S)$ for sets
implies $r(x)\preceq r(y)$.  Finally, if $x\in V_H-V$ and $y\in V$
then for each $z\in Prev(x)$ and each $i\in[d]$, we have
$z_i\le y_i$ because $z\prec x\prec y$, and this implies
$r(x)\preceq y$.
\end{myproof}

\begin{corollary}
\label{cor:steiner-points-do-not-help}
If ${\cal H}_{m,d}$ has a Steiner $k$-TC-spanner $H$, it also
has a $k$-TC-spanner with the same number of nodes and at most the same number of edges.
\end{corollary}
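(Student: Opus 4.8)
\emph{Proof idea.} The entire statement follows by applying \lemref{steiner-points-do-not-help} with $G=\calh_{m,d}$ itself, so that $V=[m]^d$ and the original vertices already occupy every grid point. The only preparatory step is to make sure the Steiner $k$-TC-spanner we hand to the lemma satisfies the minimality hypothesis used in its proof. Given an arbitrary Steiner $k$-TC-spanner $H$ of $\calh_{m,d}$, repeatedly delete any Steiner vertex $s$ whose down-set $Prev(s)=\{x\in V:\ x\text{ reaches }s\text{ in }H\}$ is empty, together with its incident edges. Such an $s$ cannot be an internal vertex of any directed path between two elements of $V$ (the initial segment of such a path would exhibit some $u\in V$ reaching $s$, contradicting $Prev(s)=\varnothing$), so each deletion preserves the Steiner $k$-TC-spanner property and never increases the number of edges. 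Since the vertex set is finite, this halts at a Steiner $k$-TC-spanner $H'=(V_{H'},E_{H'})$ with $|E_{H'}|\le|E_H|$ in which no Steiner vertex has an empty down-set; hence $V_{H'}$ is minimal in the sense \lemref{steiner-points-do-not-help} requires.

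Now apply the lemma to $H'$ to obtain a map $r\colon V_{H'}\to[m]^d$ that is the identity on $V=[m]^d$. As in the proof of that lemma, every edge $(x,y)\in E_{H'}$ gives $Prev(x)\subseteq Prev(y)$ (because $x$ reaches $y$ in $H'$), and since each coordinate of $r(\cdot)$ is a coordinatewise maximum over the down-set, monotonicity yields $r(x)\preceq r(y)$ in the dominance order. Because $V$ already exhausts $[m]^d$, the map $r$ collapses every Steiner vertex onto an existing grid vertex. Let $\widehat H$ be the image of $H'$ under $r$: its vertex set is $[m]^d$, and its edge set is $\{(r(x),r(y)):(x,y)\in E_{H'}\}$ with self-loops deleted and parallel edges merged. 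Then $\widehat H$ has $m^d$ vertices and at most $|E_{H'}|\le|E_H|$ edges, matching the bounds claimed in the corollary.

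It remains to check that $\widehat H$ is a (non-Steiner) $k$-TC-spanner of $\calh_{m,d}$. Every edge $(a,b)$ of $\widehat H$ has $a\prec b$ in the dominance order (it came from some $(x,y)\in E_{H'}$ with $a=r(x)\preceq r(y)=b$ and $a\ne b$ after removing self-loops), so $E(\widehat H)$ lies in the transitive closure of $\calh_{m,d}$; by transitivity, any directed path in $\widehat H$ joins comparable grid points, so $\widehat H$ introduces no reachabilities absent from $\calh_{m,d}$. Conversely, if $u\prec v$ in $[m]^d$ then $u$ reaches $v$ in $H'$ by a walk of length at most $k$, and applying $r$ to that walk (dropping the steps whose two endpoints now coincide) produces a walk from $u$ to $v$ in $\widehat H$ of length at most $k$, so $d_{\widehat H}(u,v)\le k$. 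This proves the corollary. I do not anticipate a genuine obstacle here: all the substance lives in \lemref{steiner-points-do-not-help}, and the only points needing care are the initial reduction to a minimal $H'$ and the bookkeeping that the collapsing map $r$ can only merge or delete edges, hence never increases the edge count.
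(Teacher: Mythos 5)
Your argument is correct and is exactly the derivation the paper intends: the corollary is stated as an immediate consequence of Lemma~\ref{lem:steiner-points-do-not-help}, and you supply the routine steps (pruning to a minimal $V_{H}$, collapsing Steiner vertices via $r$ onto $[m]^d$, and checking that paths and reachability are preserved while the edge count does not increase). No gaps.
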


\section{Our Lower Bound for 2-TC-spanners of the Hypergrid}\label{sec:grid-lower-bounds}
In this section, we prove Theorem~\ref{thm:grid-lb} that gives a nearly tight lower bound on the size of (Steiner) 2-TC-spanners of the hypergrids $\calh_{m,d}$.  By Corollary~\ref{cor:steiner-points-do-not-help}, we only have to consider non-Steiner TC-spanners.

\begin{proofof}{\thmref{grid-lb}}
We start by introducing a linear program for the sparsest 2-TC-spanner of an arbitrary graph. Our lower bound on the size of a 2-TC-spanner of $\calh_{m,d}$ is obtained by finding a feasible solution to the dual program, which, by definition, gives  a lower bound on the objective function of the primal.

\paragraph{Integer linear program for sparsest 2-TC-spanner.}\label{sec:linear-program}
%We write $x \prec y$ if there is a path from $x$ to $y$ in the directed graph $G$ and $x \preceq y$ if either $x \prec y$ or $x = y$.

  For every graph, we can find the size of a sparsest 2-TC-spanner by solving the following \{0,1\}-linear program which is a special case of a more general program from \cite{tc-spanners-soda} for directed $k$-spanners.
  For all vertices $u$, $v$ satisfying $u \preceq v$, we introduce variables $x_{uv} \in \{0,1\}$.
  If $H=(V, E_H)$ is the corresponding 2-TC-spanner, $x_{uv}=1$ iff $(u,v) \in E_H$.
  For all vertices $u$, $v$, $w$ satisfying $u \preceq w \preceq v$, we introduce auxiliary variables $x'_{uwv} \in \{0,1\}$.
  If $H=(V, E_H)$ is the corresponding 2-TC-spanner, $x'_{uwv} = 1$ if both $(u,w)$ and $(w,v)$ are in $E_H$.
  The \{0,1\}-linear program is as follows:
    \begin{align*}
    \mbox{minimize \ \ } & \sum\limits_{u,v \colon u \preceq v} x_{uv}& \\
    \mbox{subject to \ \ } & x_{uw} - x'_{uwv} \ge 0, x_{wv} - x'_{uwv} \ge 0 &\forall u, v, w \colon u \preceq w \preceq v; \\
    &\sum\limits_{w \colon u \preceq w \preceq v} x'_{uwv} \ge 1 &\forall u, v \colon u \preceq v; \\
     &x_{uv} \in \{0,1\}&\forall u, v \colon u \preceq v;\\
     &x'_{uwv} \in \{0,1\}&\forall u, v, w \colon u \preceq w \preceq v.
    \end{align*}
  The size of the sparsest $2$-TC-spanner and the optimal value of the objective function of this linear program differ by $\sum\limits_u x_{uu} \leq m^d$. \ignore{\snote{Isn't this sum exactly $n$? Maybe we should say this instead of a factor of 2? o.w., it is not clear the factor of 2 between which quantities?}}
  Since we are considering asymptotic behavior of the size of the $2$-TC-spanner, this difference can be ignored.

  Every feasible solution of the following fractional relaxation of a dual linear program gives a lower bound on the objective function of the primal.
  \begin{align}
    \mbox{maximize \ \ } & \sum\limits_{u,v \colon u \preceq v} y_{uv}& \nonumber\\
    \mbox{subject to \ \ } &\label{dualA} \sum\limits_{w \colon v \preceq w} y'_{uvw} + \sum\limits_{w \colon w \preceq u} y''_{wuv} \le 1& \forall u, v \colon u \preceq v;\\
    &\label{dualB}y_{uv} - y'_{uwv} - y''_{uwv} \le 0& \forall u, v, w \colon u \preceq w \preceq v; \\
    &y_{uv} \ge 0& \forall u \preceq v ;\nonumber\\
    &y'_{uwv} \ge 0, y''_{uwv} \ge 0& \forall u \preceq w \preceq v. \nonumber
  \end{align}

\begin{comment}
  In \cite{tc-spanners} it is shown that sparsest 2-TC-spanner problem for general graphs is $\Omega(\log n)$-inapproximable, unless P=NP.
  The upper bound on integrality gap of the presented linear program is $O(\log n)$, as shown in \cite{DinitzPhD}, and thus matches this lower bound.
  However, when restricted to specific family of graphs $\mathcal{H}_{m,d}$, integrality gap turns out to be $O(1)$ and this lets us get tight lower bound on $S_2(\mathcal{H}_{m,d})$.
\end{comment}

%\subsection{Lower bound for 2-TC-spanner of ${\cal H}_{m,d}$}\label{sec:grid-lower-bound-lp}
\paragraph{Finding a feasible solution for the dual.}
The rest of the proof of the \thmref{grid-lb} can be broken down into the following steps:
\begin{enumerate}
\item We choose initial values $\hat y_{uv}$ for the variables $y_{uv}$ of the dual program and, in \lemref{sum-of-constraints}, give a lower bound on the resulting value of the objective function of the primal program.

\item We choose initial values $\hat y'_{uvw}$ and $\hat y''_{uvw}$ for variables $y'_{uvw}$ and $y''_{uvw}$ to ensure that (\ref{dualB}) holds.

\item In \lemref{lemma2}, we give an upper bound on the left side of (\ref{dualA}) for all $u \preceq v$.
  Our bound is a constant larger than 1 and independent of $n$.
  We obtain a feasible solution to the dual by dividing the initial variable values (and, consequently, the value of objective function) by this constant.
\end{enumerate}

\noindent \textbf{Step 1.}  For a vector $x = (x_1, \dots, x_d) \in \mathbb [0, m - 1]^d$, let $\Vol(x)$ denote $\prod_{i \in [d]} (x_i + 1)$. This corresponds to the number of hypergrid points inside a $d$-dimensional box with corners $u$ and $v$, where $v-u=x$. To obtain the desired lower bound, we set $\hat y_{uv} = \frac{1}{\Vol(v - u)}$ for all $u \preceq v$.    This gives the value of the objective function of the dual program, according to the following lemma.
\begin{lemma}\label{lem:sum-of-constraints}
\ignore{For $m > e^{2d}$, $\displaystyle \sum\limits_{u, v \colon u \preceq v} \hat y_{uv} > \frac{m^d \ln^d m}{2}.$}
$\displaystyle \sum\limits_{u,v \colon u \preceq v} \hat{y}_{uv} > m^d (\ln m - 1)^d$
\end{lemma}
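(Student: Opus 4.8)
The plan is to evaluate $\sum_{u,v\colon u\preceq v}\hat y_{uv}$ in closed form by exploiting the product structure of $\Vol$. First I would reorganize the sum by the difference vector: for a pair $u\preceq v$ in $\calh_{m,d}$ set $x=v-u$, so that $x$ ranges over $[0,m-1]^d$ and, for a fixed $x$, the number of pairs $(u,v)$ with $v-u=x$ is exactly $\prod_{i=1}^{d}(m-x_i)$ (the $i$th coordinate of $u$ may take any of the $m-x_i$ admissible values). Since $\hat y_{uv}=1/\Vol(v-u)=\prod_{i=1}^{d}1/(x_i+1)$, the summand factors over coordinates, hence
\[
\sum_{u,v\colon u\preceq v}\hat y_{uv}
=\sum_{x\in[0,m-1]^d}\ \prod_{i=1}^{d}\frac{m-x_i}{x_i+1}
=\prod_{i=1}^{d}\left(\sum_{t=0}^{m-1}\frac{m-t}{t+1}\right)
=\left(\sum_{t=0}^{m-1}\frac{m-t}{t+1}\right)^{\!d}.
\]

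Next I would compute the one–dimensional sum. Writing $m-t=(m+1)-(t+1)$ gives
\[
\sum_{t=0}^{m-1}\frac{m-t}{t+1}=(m+1)\sum_{t=0}^{m-1}\frac{1}{t+1}-m=(m+1)H_m-m,
\]
where $H_m=\sum_{j=1}^{m}1/j$ is the $m$th harmonic number. The standard integral comparison $H_m>\int_{1}^{m+1}\frac{dx}{x}=\ln(m+1)>\ln m$ then gives $(m+1)H_m-m>(m+1)\ln m-m=m(\ln m-1)+\ln m$, so each of the $d$ identical factors above strictly exceeds $m(\ln m-1)$ while being positive (indeed $H_m\ge 1$ forces $(m+1)H_m-m\ge 1$). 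Raising to the $d$th power,
\[
\sum_{u,v\colon u\preceq v}\hat y_{uv}=\bigl((m+1)H_m-m\bigr)^{d}>\bigl(m(\ln m-1)\bigr)^{d}=m^{d}(\ln m-1)^{d},
\]
which is exactly the claimed bound. For the small values of $m$ with $\ln m\le 1$ one checks directly that $(m+1)H_m-m$ also exceeds $m\,\lvert\ln m-1\rvert$, so the $d$th-power step remains valid; in any case such $m$ fall outside the parameter range of interest.

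I do not expect a genuine obstacle. The heart of the argument is simply that, after grouping the sum by the difference vector $v-u$, the product form of $\Vol$ makes everything factor into $d$ copies of a single harmonic-type sum. The only points needing care are the exact count of pairs with a prescribed difference and the elementary estimate $H_m>\ln m$; the rest is bookkeeping.
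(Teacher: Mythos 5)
Your proposal is correct and follows essentially the same route as the paper: both reindex the sum by the difference vector so that it factors into the $d$th power of the one-dimensional harmonic-type sum $\sum_{l=1}^{m}\frac{m-l+1}{l}=(m+1)H_m-m$, and both then apply the integral bound $H_m>\ln(m+1)$ to conclude. The only cosmetic difference is that the paper passes through the intermediate quantity $(m+1)\ln(m+1)-m$ rather than $(m+1)\ln m-m$, which changes nothing of substance.
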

\begin{myproof}
Substituting $1/(\Vol(v - u))$ for $\hat y_{uv}$, we get:
\begin{eqnarray*}
\sum\limits_{u,v \colon u \preceq v} \hat y_{uv}
&=&
\sum\limits_{u, v \colon u \preceq v} \frac{1}{\Vol(v - u)} =
\sum\limits_{l \in [m]^d}
\prod\limits_{i \in [d]} \frac{m - l_i + 1}{l_i} = \left(\sum\limits_{l \in [m]} \frac{m - l + 1}{l}\right)^d  \\
&>& ((m + 1)\ln (m + 1) - m)^d > m^d (\ln m - 1)^d.
\end{eqnarray*}
\ignore{Since $(\ln m - 1)^d \ge \frac{\ln^d m}{2}$ for all $m > e^{2d}$, the statement of the lemma follows.}
\end{myproof}

\noindent \textbf{Step 2.}  The values of $\hat y'_{uvw}$ and $\hat y''_{uvw}$ are set as follows to satisfy (\ref{dualB}) tightly (without any slack):
\begin{eqnarray*}
  \hat y'_{uvw} = \hat y_{uw}\frac{\Vol(v - u)}{\Vol(v - u) + \Vol(w - v)},~
  \hat y''_{uvw} = \hat y_{uw} - \hat y'_{uvw} = \hat y_{uw}\frac{\Vol(w - v)}{\Vol(v - u) + \Vol(w - v)}.
\end{eqnarray*}

\noindent \textbf{Step 3.}  The initial values $\hat y'_{uvw}$ and $\hat y''_{uvw}$ do not necessarily satisfy (\ref{dualA}).
Next, we give the same upper bound on the left hand side of all constraints (\ref{dualA}).
\begin{lemma}\label{lem:lemma2}
    For all $u \preceq v$, $\sum\limits_{w \colon v \preceq w} \hat y'_{uvw} + \sum\limits_{w \colon w \preceq u} \hat y''_{wuv} \le (4\pi)^d$.
\end{lemma}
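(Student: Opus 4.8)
The plan is to show that each of the two sums in the statement is, up to a factor $2^d$, a lower Riemann sum for an explicit $d$-dimensional integral whose value is at most $\pi^d/2$; this gives a bound of $(2\pi)^d$ on the left side, which is well below $(4\pi)^d$. The key point is that a naive term-by-term estimate only yields a $(\log m)^d$ factor, so one must pass through an integral whose value is independent of $m$ (equivalently of $n$).

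First I would rewrite $\hat y'_{uvw}$ in a more convenient form. Fix $u \preceq v$ and, for $w$ with $v \preceq w$, set $a = v-u$ and $b = w-v$, both in $\mathbb{Z}_{\ge 0}^d$. Using $\hat y_{uw} = 1/\Vol(w-u)$ and $w-u = a+b$, we have $\hat y'_{uvw} = \Vol(a)/\bigl(\Vol(a+b)(\Vol(a)+\Vol(b))\bigr)$. Introducing $r_i = b_i/(a_i+1)$ and $\rho_i = (b_i+1)/(a_i+1) = r_i + \tfrac1{a_i+1}$, the identities $\Vol(a+b) = \prod_i(a_i+b_i+1) = \Vol(a)\prod_i(1+r_i)$ and $\Vol(b) = \Vol(a)\prod_i\rho_i$ give
\[
\hat y'_{uvw} \;=\; \frac{1}{\Vol(a)\,\prod_i(1+r_i)\,\bigl(1+\prod_i\rho_i\bigr)}.
\]
Since $a_i,b_i \ge 0$ we have $1+r_i \ge \tfrac12(1+\rho_i)$ (this is $a_i+1+b_i \ge 1$ after clearing denominators), so $\hat y'_{uvw} \le 2^d\bigl/\bigl(\Vol(a)\prod_i(1+\rho_i)(1+\prod_i\rho_i)\bigr)$.

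Next I would pass to an integral. As $w$ ranges over $\{w : v \preceq w\}$, $b$ ranges over a finite sub-box of $\mathbb{Z}_{\ge 0}^d$, and since all terms are nonnegative it suffices to sum over all of $\mathbb{Z}_{\ge 0}^d$. Put $F(\rho) = \bigl(\prod_i(1+\rho_i)\bigr)^{-1}\bigl(1+\prod_i\rho_i\bigr)^{-1}$ on $(0,\infty)^d$; $F$ is nonnegative and nonincreasing in each coordinate. As $b$ ranges over $\mathbb{Z}_{\ge 0}^d$ the point $(\rho_1,\dots,\rho_d)$ ranges over the lattice $\prod_i\{\,k/(a_i+1) : k \ge 1\,\}$, whose cells have volume $\prod_i 1/(a_i+1) = 1/\Vol(a)$ and at whose right-most corners $F$ is minimized on each cell; hence $\sum_{b \in \mathbb{Z}_{\ge 0}^d} \Vol(a)^{-1} F(\rho) \le \int_{(0,\infty)^d} F(\rho)\,d\rho$. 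Finally, by AM--GM $1+\prod_i\rho_i \ge 2\sqrt{\prod_i\rho_i}$, so $F(\rho) \le \tfrac12\prod_i \tfrac{1}{(1+\rho_i)\sqrt{\rho_i}}$ and the integral factors: $\int_{(0,\infty)^d} F \le \tfrac12\bigl(\int_0^\infty \tfrac{dt}{(1+t)\sqrt t}\bigr)^d = \tfrac12\pi^d$, using $\int_0^\infty \frac{dt}{(1+t)\sqrt t} = \pi$ (substitute $t=s^2$). Combining the above, $\sum_{w : v \preceq w}\hat y'_{uvw} \le 2^d\cdot \tfrac12\pi^d = \tfrac12(2\pi)^d$.

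The second sum $\sum_{w : w\preceq u}\hat y''_{wuv}$ is handled identically: writing $a = v-u$ and $b = u-w$ one checks $\hat y''_{wuv} = \Vol(a)\bigl/\bigl(\Vol(a+b)(\Vol(a)+\Vol(b))\bigr)$, which is exactly the same expression, so the same argument bounds it by $\tfrac12(2\pi)^d$. Adding the two sums gives the left side of the lemma at most $(2\pi)^d \le (4\pi)^d$, as claimed. The only step needing care is the sum-to-integral comparison: one must verify that $F$ is coordinatewise nonincreasing (so that evaluating at cell corners underestimates the integral) and observe that restricting $b$ to its actual finite range only lowers the sum — everything else is routine.
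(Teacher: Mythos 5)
Your proof is correct and follows essentially the same route as the paper's: both reduce the sums to the integral $\int_{(0,\infty)^d} \bigl(\prod_i(1+t_i)\bigr)^{-1}\bigl(1+\prod_i t_i\bigr)^{-1}\,dt$ and bound it by $\pi^d/2$ via AM--GM together with $\int_0^\infty \frac{dt}{(1+t)\sqrt{t}}=\pi$. Your more careful sum-to-integral comparison (losing only a factor $2^d$ per sum rather than $2^{2d}$ overall) even yields the slightly sharper constant $(2\pi)^d$ in place of $(4\pi)^d$.
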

\begin{myproof}
Below we denote $v - u$ by $x^0 = (x^0_1, \dots, x^0_d)$, a $d$-dimensional vector of ones (1, \dots, 1) as $\fone$ and $\prod_{i \in [d]} dx_i$ by $dx$.
\begin{eqnarray}
    \sum\limits_{w \colon v \preceq w} \hat y'_{uvw} + \sum\limits_{w \colon w \preceq u} \hat y''_{wuv}
    &=&  \sum\limits_{w \colon v \preceq w} \hat y_{uw} \frac{\Vol(v - u)}{\Vol(v - u) + \Vol(w - v)} +
\sum\limits_{w \colon w \preceq u} \hat y_{wv} \frac{\Vol(v - u)}{\Vol(u - w) + \Vol(v - u)} \nonumber \\
 &<& 2 \sum\limits_{x \in [0,m]^d}
 \frac{\Vol(x^0)}{V(x^0 + x)(\Vol(x^0) + \Vol(x))} \nonumber \\
 &\le& 2^{2d + 1} \sum\limits_{x \in [1,m + 1]^d}
 \frac{\Vol(x^0)}{V(x^0 + x)(\Vol(x^0) + \Vol(x))} \label{eqn:eq1}\\
 &<& 2^{2d + 1} \int_{\mathbb{R}^{d}_{+}} \frac{V(x^0) d x}{\Vol(x^0 + x)(\Vol(x^0) + \Vol(x))} \label{eqn:eq2}\\
 &=& 2^{2d + 1} \int_{\mathbb{R}^{d}_{+}} \frac{\Vol(x^0) d t}{\Vol(t)(\Vol(x^0) + \prod\limits_{i}(t_i (x^0_i + 1) + 1))} \label{eqn:eq3}\\
 &<& 2^{2d + 1} \int_{\mathbb{R}^{d}_{+}} \frac{\Vol(x^0) d t}{\Vol(t)(\Vol(x^0) + \prod\limits_{i}t_i (x^0_i + 1))} \nonumber \\
 &=& 2^{2d + 1} \int_{\mathbb{R}^{d}_{+}} \frac{d t} {V(t)(\fone + V(t - 1))}. \nonumber
\end{eqnarray}

\gnote{Added explanations here.}
The first equality above is obtained by plugging in values of $\hat y'$ and $\hat y''$ from Step 2 with appropriate indices.
The first inequality is obtained by extending each sum to the whole subgrid.
Here (\ref{eqn:eq1}) holds because $\frac{1}{\Vol(u)} \le \frac{2^d}{\Vol(u + 1)}$ for all $u$, such that $u_i \ge 0$.
In (\ref{eqn:eq2}), the sum can be bounded from above by the integral because the summand is monotone in all variables.
To get (\ref{eqn:eq3}), we substitute $x$ by $t$, which satisfies $x_i = t_i(x^0_i + 1)$.
In the last inequality, we substitute $\Vol(x^0)$ for $\prod\limits_{i}(x^0_i + 1)$.

\begin{claim}\label{claim:intestimate}
Let $I_d= \int_{\mathbb{R}^{d}_{+}} \frac{d t} {V(t)(\fone + V(t - 1))}$. Then $I_d \le \frac{\pi^d}{2}$ for all $d$.
\end{claim}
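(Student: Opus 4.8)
The plan is to prove the bound by an induction on the dimension, built around a one-parameter family of integrals that is stable under integrating out a single coordinate. First I would rewrite the integrand cleanly: since each factor of $V(t-\fone)$ is $(t_i-1)+1=t_i$, we have $V(t-\fone)=\prod_{i=1}^d t_i$, and the isolated $\fone$ in the denominator is just the scalar $1$ ($=V(\fzero)$), so the claim concerns
\[
I_d=\int_{\mathbb{R}^d_+}\frac{dt_1\cdots dt_d}{\prod_{i=1}^d(t_i+1)\cdot\bigl(1+\prod_{i=1}^d t_i\bigr)}.
\]

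For $y>0$ and $k\ge 0$, I would define
\[
g_k(y)=\int_{\mathbb{R}^k_+}\frac{dt_1\cdots dt_k}{\prod_{i=1}^k(t_i+1)\cdot\bigl(1+y\prod_{i=1}^k t_i\bigr)},
\]
with the convention $g_0(y)=\tfrac{1}{1+y}$, so that $I_d=g_d(1)$. Integrating out the last coordinate $t_k$ and absorbing it into the parameter gives the recursion $g_k(y)=\int_0^\infty\frac{g_{k-1}(yt)}{t+1}\,dt$, which is legitimate by Tonelli's theorem (the integrand is nonnegative).

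The quantity to track is $c_k:=\sup_{y>0}\sqrt{y}\,g_k(y)$. For the base case, $\sqrt{y}\,g_0(y)=\frac{\sqrt y}{1+y}$ is maximized at $y=1$, so $c_0=\tfrac12$. For the inductive step, write $\sqrt y=\sqrt{yt}\cdot t^{-1/2}$ inside the recursion:
\[
\sqrt y\,g_k(y)=\int_0^\infty\frac{\sqrt{yt}\,g_{k-1}(yt)}{t+1}\cdot\frac{dt}{\sqrt t}\ \le\ c_{k-1}\int_0^\infty\frac{dt}{\sqrt t\,(t+1)}\ =\ \pi\,c_{k-1},
\]
using $\int_0^\infty\frac{dt}{\sqrt t\,(t+1)}=\pi$ (substitute $t=u^2$). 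This also shows each $g_k$ is finite. Taking the supremum over $y$ yields $c_k\le\pi\,c_{k-1}$, hence $c_d\le\pi^d c_0=\pi^d/2$, and in particular $I_d=g_d(1)\le c_d\le\pi^d/2$, which is the claim.

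I do not expect a genuine obstacle: the one real idea is choosing the invariant $\sqrt y\,g_k(y)$, after which the induction is mechanical. The only things that need care are correctly identifying the integrand, the elementary evaluation $\int_0^\infty dt/(\sqrt t(1+t))=\pi$, and observing that the iterated-integral rearrangement is valid by positivity (so no delicate Fubini justification is needed). As an alternative route, one can evaluate $I_d$ exactly via the Mellin--Barnes identity $\frac{1}{1+y}=\frac{1}{2\pi i}\int_{\mathrm{Re}\,z=1/2}\frac{\pi}{\sin\pi z}\,y^{-z}\,dz$ together with $\int_0^\infty\frac{t^{-z}}{1+t}\,dt=\frac{\pi}{\sin\pi z}$ on $\mathrm{Re}\,z=1/2$, obtaining $I_d=\frac{\pi^d}{2}\int_{\mathbb{R}}\cosh^{-(d+1)}(\pi s)\,ds\le\frac{\pi^d}{2}$ since $\cosh\ge 1$; but that requires justifying a Fubini interchange, whereas the inductive argument above stays elementary.
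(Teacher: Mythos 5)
Your proof is correct, and it takes a genuinely different route from the paper's. The paper makes the substitution $x_i=\frac{1-t_i}{1+t_i}$, turning $I_d$ into $\int_{[-1,1]^d}\frac{dx}{\prod_i(1+x_i)+\prod_i(1-x_i)}$, and then applies $a+b\ge 2\sqrt{ab}$ to the denominator; this separates the integral into $\frac12\bigl(\int_{-1}^1\frac{dx}{\sqrt{1-x^2}}\bigr)^d=\frac{\pi^d}{2}$ in one stroke. You instead peel off one coordinate at a time and control the resulting one-parameter family $g_k(y)$ via the weighted supremum $c_k=\sup_y\sqrt{y}\,g_k(y)$; your base case $c_0=1/2$ and the factor $\int_0^\infty\frac{dt}{\sqrt{t}(t+1)}=\pi$ per step reproduce exactly the same constants. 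The two arguments are closely related under the hood --- the same substitution sends $\frac{dx}{\sqrt{1-x^2}}$ to $\frac{dt}{\sqrt{t}(1+t)}$, and your $\sqrt{y}$ weight is playing the role of the paper's AM--GM step --- but structurally yours is an induction with a well-chosen invariant while the paper's is a global symmetrization. What the paper's version buys is brevity and a transparent reason the bound factorizes; what yours buys is that it avoids the (mildly magical) change of variables, stays entirely on $\mathbb{R}_+^d$, makes the finiteness of the integral explicit along the way, and the intermediate objects $g_k(y)$ carry more information (your Mellin--Barnes remark shows the method can be pushed to an exact evaluation). Both correctly identify the integrand as $\prod_i(t_i+1)^{-1}\bigl(1+\prod_i t_i\bigr)^{-1}$, and the Tonelli justification you give for the iterated integration is all that is needed.
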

Lemma~\ref{lem:lemma2} follows from Claim~\ref{claim:intestimate} whose proof is deferred to Appendix~\ref{app:grid-lb-integral}.
\end{myproof}

Finally, we obtain a feasible solution by dividing initial values $\hat y_{uv}$, $\hat y'_{uvw}$ and $\hat y''_{uvw}$ by the upper bound $(4\pi)^d$ from \lemref{lemma2}. Then \lemref{sum-of-constraints} gives the desired bound on the value of the objective function.
\begin{equation*}
\sum_{u,v \colon u \preceq v} \frac{\hat y_{uv}}{(4\pi)^d} > m^d \left(\frac{\ln m - 1}{4\pi}\right)^d
\end{equation*}
This completes the proof of \thmref{grid-lb}.\end{proofof}

\newcommand{\Bx}{{\mathbb B}}
\newcommand{\Ex}{{\mathbb E}}
\newcommand{\Px}{{\mathbb P}}
\newcommand{\BxP}{{{\mathbb B \mathbb P}}}
\newcommand{\ivec}{\vec{\imath}}
\newcommand{\jvec}{\vec{\jmath}}
\newcommand{\cJ}{{\mathcal J}}
\newcommand{\cP}{{\mathcal G}}
\newcommand{\cG}{{\mathcal G}}
\newcommand{\valph}{\vec{\alpha}_{\ivec}}
\newcommand{\vlamb}{\vec{\lambda}_{\ivec}}
\newcommand{\vkapp}{\vec{\kappa}}
\newcommand{\E}{\mathop{\mathbb E}}
\newcommand{\pairind}[1]{\lambda_i{(#1)} - \pi_i{(#1)}}
\newcommand{\boxof}{\lambda_i}
\newcommand{\topof}{\pi_i}
\newcommand{\boxofvec}{\lambda_{\ivec}}
\newcommand{\topofvec}{\pi_{\ivec}}
\newcommand{\groups}{g}
\newcommand{\lprime}{\ell'}

\section{Our Lower Bound for $k$-TC-spanners of $d$-dimensional Posets for $k>2$}\label{sec:lb-posetk>2}
In this section, we prove Theorem~\ref{thm:lb-steiner-2dim-k>2}.

\begin{proofof}{Theorem~\ref{thm:lb-steiner-2dim-k>2}}
Unlike in the previous section, the poset which attains the lower bound is constructed probabilistically, not explicitly.  Let $\calg_d$ be a distribution on $n$-element posets embedded in $\calh_{n,d}$, where all poset elements differ on coordinates in dimension 1, and for each such coordinate $a \in [n]$, an element $p_a$ is chosen uniformly and independently  from $\{a\} \times [n]^{d - 1}$. The partial order is then given by the dominance order $x \preceq y$ on $\calh_{n,d}$.

Recall that $S_k(G)$ denotes the size of the sparsest Steiner $k$-TC-spanner of poset $G$. The following lemma gives a lower bound on the expected size of a Steiner $k$-TC-spanner of a poset drawn from $\calg_d$.

\begin{lemma}\label{lem:lb-steiner-d>=2}
 $\E\limits_{G \gets \cG_d}[S_k(G)]=\Omega(n \log^{\lceil (d - 1) / k \rceil} n)$ for all $k \ge 3$ and constant $d \ge 2$.
\end{lemma}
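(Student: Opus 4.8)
The plan is to prove the lower bound by a counting argument over a carefully chosen family of "box partitions" of the hypergrid $\calh_{n,d}$, combined with the probabilistic method over the random poset $G \gets \cG_d$. First I would fix a parameter $g = \lceil (d-1)/k \rceil$, so that $g$ of the $d-1$ "free" coordinates (dimensions $2,\dots,d$) will be used to force short paths to cross many box boundaries; the remaining $d-1-g$ free dimensions play no active role. For a tuple $\ivec = (i_2,\dots,i_{g+1}) \in [\log n]^{g}$, consider the partition $\cP_{\ivec}$ of $\calh_{n,d}$ that splits dimension 1 into $n$ singleton slabs and, for each $j \in \{2,\dots,g+1\}$, splits dimension $j$ into $2^{i_j}$ equal intervals, leaving the other coordinates unsplit. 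Each box of $\cP_{\ivec}$ has side length $n/2^{i_j}$ in dimension $j$. The core observation, as in prior hypergrid lower bounds, is that if $x \prec y$ are poset elements lying in "opposite corners" of a minimal common enclosing box at scale $\ivec$ (meaning they share a box at scale $\ivec - \fone$ but not at scale $\ivec$), then any length-$k$ path from $x$ to $y$ in a Steiner $k$-TC-spanner must contain an edge that crosses a boundary hyperplane of $\cP_{\ivec}$ in one of the $g$ active dimensions; charge that pair $(x,y)$ to such a crossing edge.

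Next I would bound things in expectation. For a fixed partition $\cP_{\ivec}$ and a random $G \gets \cG_d$: I would lower-bound the expected number of "critical pairs" $(x,y)$ that must be witnessed at scale $\ivec$, by showing that a constant fraction of pairs of poset elements whose dimension-1 coordinates are close enough end up in the right corner configuration — this uses the independence and uniformity of the $d-1$ free coordinates of each $p_a$. Summing over all $\ivec \in [\log n]^{g}$ gives $\Omega(n \log^{g} n)$ critical pairs in total in expectation (the factor $n$ from dimension 1, the $\log^{g} n$ from the product over active dimensions). Then, crucially, I would show that each edge $(u,w)$ of a Steiner $k$-TC-spanner is charged only $O(1)$ times overall: an edge can be a boundary-crossing edge for partition $\cP_{\ivec}$ only when, for each active dimension $j$, the scale $2^{i_j}$ is essentially pinned down by the dimension-$j$ coordinates of $u$ and $w$ (the unique scale at which the boundary between $u_j$ and $w_j$ appears), so an edge is "eligible" for at most one $\ivec$, and within that partition it can be the charged witness for only a bounded number of pairs — here I would need to argue, using the geometry of the random point set and possibly a dyadic/within-box sub-partition, that a single crossing edge cannot simultaneously serve too many corner-pairs. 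Combining: $\E[S_k(G)] \cdot O(1) \geq \Omega(n\log^{g} n)$, hence $\E[S_k(G)] = \Omega(n \log^{\lceil (d-1)/k\rceil} n)$.

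The main obstacle I expect is the "each edge charged $O(1)$ times" step, which is exactly the delicate part that replaces the long combinatorial balancing argument of earlier work. Two things have to go right: (i) the reduction from a length-$k$ path to a single boundary-crossing edge must be set up so that the partition scale is recoverable from the edge endpoints — this is why the partition uses $g = \lceil(d-1)/k\rceil$ active dimensions rather than $d-1$, since one length-$k$ path can afford to "skip scales" by a factor $k$; and (ii) bounding, for a fixed crossing edge $(u,w)$ inside a fixed box, the number of pairs $(x,y)$ it can witness, which should follow because $x$ must lie in the sub-box below $u$ on the low side of the boundary and $y$ in the sub-box above $w$, and by a recursive/within-box argument (or by choosing the "opposite corner" condition stringently enough that only one pair per box is critical) this is $O(1)$ in expectation. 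I would handle (ii) by making the critical-pair definition scale-local: at scale $\ivec$, only count the pair $(x,y)$ consisting of the minimum and maximum poset elements within each box of $\cP_{\ivec-\fone}$ that are separated by $\cP_{\ivec}$, so that there is essentially one critical pair per box of $\cP_{\ivec}$, giving a clean bij­ective-style charging. I would also need a short separate argument that the constant-$d$ hypothesis is genuinely used only to keep $(\text{const})^d$ factors absorbed into the $\Omega(\cdot)$, so that the stated bound is clean.
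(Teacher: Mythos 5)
There is a genuine gap in the charging step, and it sits exactly where you predicted trouble. You restrict the partitions to $g=\lceil (d-1)/k\rceil$ active dimensions and charge a critical pair to an edge that crosses a boundary of $\cP_{\ivec}$ ``in one of the $g$ active dimensions.'' But an edge that crosses in only one active dimension pins down only that one scale $i_j$; in each of the other $g-1$ active dimensions its two endpoints may lie in the same interval at every scale, so roughly $\log n$ choices of $i_{j'}$ remain consistent for each of them. Hence a single edge can serve as the charged witness for $\Theta(\log^{g-1} n)$ different partitions, not $O(1)$, and your final division yields only $\Omega(n\log^{g} n/\log^{g-1}n)=\Omega(n\log n)$ for every $d$. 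Your stated reason for taking $g$ rather than $d-1$ active dimensions (``a length-$k$ path can skip scales by a factor $k$'') does not repair this: shrinking the number of active dimensions makes the number of coordinates guaranteed to be pinned per edge \emph{smaller} (a length-$k$ path between opposite corners of a $g$-dimensional parity cube only forces some edge to flip $\lceil g/k\rceil$ coordinates), so you cannot get all $g$ scales determined by one edge.

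The paper resolves this in the opposite way: it uses all $d-1$ free dimensions, indexes partitions by $\ivec\in[\lfloor\ell/(d-1)\rfloor]^{d-1}$, and charges each jump to the edge of the length-$\le k$ path that maximizes the Hamming distance between the parity vectors of its endpoints. Since the parity vector goes from $\fzero$ to $\fone$ over at most $k$ edges, the charged edge flips at least $d'=\lceil(d-1)/k\rceil$ coordinates, each of which uniquely determines the corresponding $i_t$; the remaining $d-1-d'$ coordinates contribute a multiplicity of $O(\ell^{d-1-d'})$, which is then divided into the $\Omega(n\ell^{d-1})$ expected jumps to give $\Omega(n\ell^{d'})$. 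In other words, the correct accounting is ``many partitions, controlled super-constant multiplicity,'' not ``few partitions, multiplicity $O(1)$.'' A secondary issue: your per-box ``minimum and maximum element'' definition of critical pairs is not well defined for a random antichain-like point set and does not obviously guarantee comparable pairs; the paper instead defines jumps as consecutive (in dimension $1$) elements landing in an odd box and its successor box, which makes both the existence count (a $01$-pattern argument in independent parity sequences) and the disjointness needed for the per-partition injectivity clean.
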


To simplify the presentation, we first prove the special case of Lemma~\ref{lem:lb-steiner-d>=2} for 2-dimensional posets in Section~\ref{sec:2-dim-poset-lb}. The general case is proved in Section~\ref{sec:>2-dim-poset-lb}.
Since Lemma~\ref{lem:lb-steiner-d>=2} implies the existence of a poset $G$, for which every Steiner $k$-TC-spanner
has $\Omega(n \log^{\lceil (d - 1) / k \rceil} n)$ edges,
Theorem~\ref{thm:lb-steiner-2dim-k>2} follows.
\end{proofof}

\subsection{The case of $d=2$}\label{sec:2-dim-poset-lb}
This section proves a special case of Lemma~\ref{lem:lb-steiner-d>=2} for 2-dimensional posets, which illustrates many of the ideas used in the proof of the general lemma. In both proofs, we assume that $\ell=\log n$ is an integer.
\begin{lemma}[Special case of Lemma~\ref{lem:lb-steiner-d>=2}]\label{lem:lb-steiner-d=2}
$\E\limits_{G \gets \cG_2}[S_k(G)]=\Omega(n \log n)$ for all $k \ge 3$ and $d = 2$.
\end{lemma}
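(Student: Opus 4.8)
The plan is to prove Lemma~\ref{lem:lb-steiner-d=2} by recursion on the length of the dimension-$1$ range, where the per-level gain comes from a combinatorial ``box-crossing'' argument on a random sub-poset.

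\emph{Setup and recursion.} Let $H$ be an optimal Steiner $k$-TC-spanner of $G$. Discarding useless Steiner vertices and applying Lemma~\ref{lem:steiner-points-do-not-help}, we may assume $V_H$ is embedded in $\calh_{n,2}$; identifying vertices sent to the same grid point only decreases $|E_H|$, which is all we need for a lower bound. Every edge of $H$ increases the dominance order, so the first coordinate is nondecreasing along every path of $H$. Cutting $[n]$ in dimension $1$ at the midpoint, the edges of $H$ split into (i) edges with both first coordinates $\le n/2$, (ii) edges with both first coordinates $>n/2$, and (iii) \emph{crossing} edges. Group (i) together with its endpoints is a Steiner $k$-TC-spanner of the sub-poset $G_L$ on $p_1,\dots,p_{n/2}$, since every required $\le k$-path stays on that side; likewise group (ii) for $G_R$. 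As $G_L,G_R$ are independent and distributed (up to a shift of first coordinates) as the natural analogue of $\cG_2$ on $n/2$ elements with second coordinates still uniform in $[n]$, writing $F(m)$ for the corresponding expected sparsest-spanner size we get
\[
F(n)\ \ge\ 2F(n/2)\ +\ \E\big[\,\#\{\text{crossing edges of }H\}\,\big].
\]
Once we show the last term is $\Omega(n)$, and $\Omega(m)$ for the $m$-element sub-poset at depth $\log(n/m)$, unrolling over $\Theta(\log n)$ levels gives $F(n)\ge\sum_i 2^i\cdot\Omega(n/2^i)=\Omega(n\log n)$.

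\emph{The box-crossing bound.} Fix the sub-poset (say $m$ elements, $m/2$ per side, second coordinates uniform in $[n]$) and partition $[n]$ in dimension $2$ into consecutive blocks of length $s=\Theta(n/m)$, so there are $\Theta(m)$ blocks and each side has $\Theta(1)$ expected elements per block. For a block $V$ let $\ell_V$ (resp.\ $r_V$) count left (resp.\ right) elements with second coordinate in $V$. Call a comparable pair $(p_a,p_{a'})$ with $a$ left, $a'$ right \emph{$V$-local} if $b_a,b_{a'}\in V$, and give it weight $1/(\ell_V r_V)$. In any grid-embedded Steiner $k$-TC-spanner, a $\le k$-path $p_a\leadsto p_{a'}$ contains a crossing edge $(w,w')$ with $p_a\preceq w\preceq w'\preceq p_{a'}$, hence $b_a\le w_2\le w'_2\le b_{a'}$, so both endpoints of that edge lie in the same block $V$. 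Conversely a crossing edge with endpoints in $V$ can serve at most $\ell_V r_V$ $V$-local pairs ($\le\ell_V$ choices for $p_a$, $\le r_V$ for $p_{a'}$). Charging each $V$-local comparable pair to one of its serving crossing edges, each crossing edge collects weight at most $(\ell_V r_V)\cdot(1/(\ell_V r_V))=1$, so
\[
\#\{\text{crossing edges}\}\ \ge\ \sum_V\frac{\#\{\text{comparable }V\text{-local pairs}\}}{\ell_V r_V}.
\]
Conditioned on $\ell_V,r_V$, the relevant second coordinates are i.i.d.\ uniform in $V$, so the $V$-th summand equals the fraction of the $\ell_V r_V$ left--right pairs in $V$ that are comparable, whose conditional expectation is $\Pr_{u,v}[u\le v]\ge 1/2$ (and which is $0$ unless $\ell_V,r_V\ge1$; ties among the uniform coordinates only help). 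By linearity and independence of $\ell_V,r_V$, the right-hand side has expectation $\ge\sum_V\tfrac12\Pr[\ell_V\ge1]\Pr[r_V\ge1]=\Omega(m)$, since each factor is a positive constant. Plugging back, the expected number of crossing edges is $\Omega(n)$ at the top level, the recursion closes, and $F(n)=\Omega(n\log n)$, proving the lemma. (Notice the argument uses nothing about $k$ beyond $k<\infty$: the random second coordinates are exactly what prevent stretch from helping, unlike for hypergrids.)

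\emph{Main obstacle.} The subtle point is obtaining the correct power of $\log n$: replacing $\ell_V r_V$ by $\max_V\ell_V r_V$ would cost a $\operatorname{polylog} n$ factor, because a few dimension-$2$ blocks are atypically dense, and the weighting by $1/(\ell_V r_V)$ is precisely what keeps the charge to each crossing edge at most $1$, eliminating that loss. The remaining care is the Steiner-vertex bookkeeping (reducing to a grid-embedded spanner without increasing the edge count) and verifying that $G_L,G_R$ are again of the form required for the induction.
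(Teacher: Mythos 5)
Your proof is correct, but it takes a genuinely different route from the paper's. The paper keeps dimension $1$ intact and instead considers $\log n$ dyadic partitions $\BxP(i)$ of dimension $2$ into strips; for each scale $i$ it defines ``jumps'' (pairs of elements that are consecutive along dimension $1$ among those landing in an adjacent odd/even strip pair), shows each jump forces a distinct spanner edge via an \emph{injective} map (the scale and strip index are recovered from the dyadic structure of the edge's second coordinates), and computes the expected number of jumps per scale to be $\Omega(n)$ by counting $01$ patterns in an i.i.d.\ parity sequence. You instead recurse dyadically on dimension $1$ and, at each of the $\log n$ levels, prove an $\Omega(m)$ bound on midpoint-crossing edges by a weighted charging argument over dimension-$2$ blocks at the single matching scale $\Theta(n/m)$; your $1/(\ell_V r_V)$ weights are what cap the charge to each crossing edge at $1$, playing the role of the paper's injectivity, and the disjointness of crossing edges across recursion levels replaces the paper's cross-scale injectivity. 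In effect you swap the roles of the two dimensions and replace an injective jump-to-edge map by a fractional charging scheme; both reductions to grid-embedded spanners with monotone paths go through Lemma~\ref{lem:steiner-points-do-not-help} identically. Your version is self-contained and, as you note, insensitive to $k$; the paper's jump formulation is the one that generalizes to $d>2$, where the Hamming-distance argument over length-$k$ paths yields the $\log^{\lceil (d-1)/k\rceil} n$ dependence, and it is not evident that your single-cut recursion would recover that.
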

%\begin{myproof}%{\lemref{lb-steiner-d=2}}
%% Sofya: couldn't make the proof environment work with wrapfigure
\noindent{\em Proof.}
To analyze the expected number of edges in a Steiner TC-spanner, we consider $\ell$ partitions of $[n]^2$ into strips. We call strips {\em boxes} for compatibility with the case of general $d$.

\begin{definition}[Box partition]\label{def:box-partition-d=2}
For each $i\in[\ell]$, we define sets of equal size that partition $[n]$ into $2^i$ intervals:
the $j$th such set, for $j\in [2^i]$, is $I^i_j= [(j-1)2^{\ell-i}+1, j2^{\ell-i}]$.
Given $i\in[\ell]$, and $j\in[2^{i}]$, the {\em box}
$\Bx(i,j)$ is $[n]\times I^{i}_{j}$ and the {\em box partition}  $\BxP(i)$ is a partition of $[n]^2$ that contains boxes $\Bx(i,j)$ for all $j\in[2^{i}]$.
\end{definition}

\begin{wrapfigure}[9]{r}{2in}
\vspace{-30pt}
\begin{center}
%\ifnum\llncs=1
%\includegraphics[width=\columnwidth]{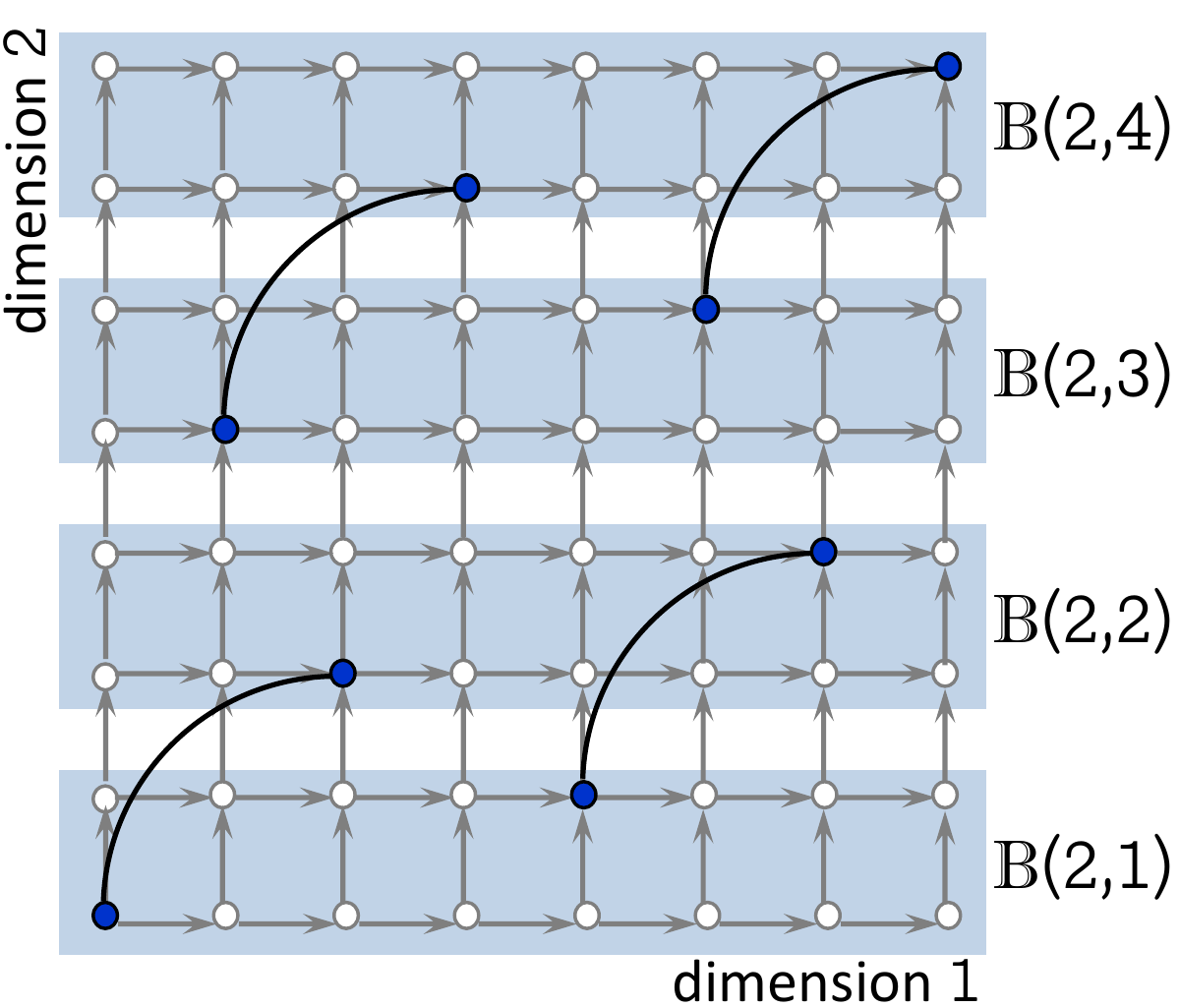}
%\else
\includegraphics[width=2in]{steiner-illustration_2dim-random-poset}
%\fi
\caption{Box partition $\BxP(2)$ and jumps it generates.}
%\label{fig:}
\end{center}
\end{wrapfigure}
We analyze the expected number of edges that cross from boxes with an odd index $j$ into boxes with index $j+1$ with respect to partition $\BxP(i)$ for all $i\in[\ell]$. To do that, we identify pairs of poset elements that force such edges to appear. The pairs of their first coordinates are called {\em jumps} and are defined next.
\begin{definition}[Jumps]
A {\em jump generated by the partition} $\BxP(i)$ is a pair $(a,b)$ of coordinates\snote{Why not a pair of nodes? It would be easier to discuss it.} in dimension 1, such that for %some $i\in[\ell]$ and
some odd $j\in[2^i]$,
the following holds:
$p_a\in \Bx(i,j)$,
$p_b\in \Bx(i,j+1)$,
while $p_c \notin \Bx(i,j)\cup \Bx(i,j+1)$ for all $c\in(a,b)$. Let $\cJ$ denote the set of jumps generated by all partitions $\BxP(i)$ for $i\in[\ell]$.
\end{definition}

We use two properties of $\cJ$, given in Claims~\ref{claim:mapping-of-jumps-dim2} and~\ref{claim:expected-jumps-dim2}.

\begin{claim}\label{claim:mapping-of-jumps-dim2}
Let $G$ be a poset, embedded into $\calh_{n,2}$, and $H=(V_H,E_H)$ be
 a Steiner $k$-TC-spanner of $G$. Then there exists a 1-1
mapping from $\cJ$ to $E_H$.
\end{claim}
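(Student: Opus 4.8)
The plan is to show that every jump $(a,b) \in \cJ$ forces the existence of at least one edge in $E_H$ crossing a specific box boundary, and that distinct jumps force edges crossing distinct boundaries (or at least can be charged injectively). Fix a jump $(a,b)$ generated by $\BxP(i)$, with witness odd index $j$, so $p_a \in \Bx(i,j)$ and $p_b \in \Bx(i,j+1)$. Since $a < b$ (as $j$ is odd, $\Bx(i,j)$ sits below $\Bx(i,j+1)$ in dimension~2, but actually we only need that the first coordinates and second coordinates are both ordered) and $p_a \preceq p_b$ in the dominance order, there is a path of length at most $k$ from $p_a$ to $p_b$ in $H$; in particular there is at least one edge $(x,y) \in E_H$ on this path with $x$ in the closed region below-or-at the top boundary of $\Bx(i,j)$ and $y$ strictly above it — more precisely, an edge whose tail has second coordinate $\le j \cdot 2^{\ell-i}$ and whose head has second coordinate $> j\cdot 2^{\ell - i}$, where we must also use that $H$ preserves non-reachability to keep Steiner vertices under control.

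First I would make precise the ``boundary level'' associated to a jump: for a jump generated by $\BxP(i)$ at odd $j$, set its boundary to be the horizontal line at height $h = j \cdot 2^{\ell - i}$ (the top of $\Bx(i,j)$, which equals the bottom boundary region of $\Bx(i,j+1)$). Any $p_a$-to-$p_b$ path in $H$ must contain an edge $(x,y)$ with (second coordinate of $x$) $\le h$ and (second coordinate of $y$) $\ge h+1$; call such an edge a \emph{crosser} of level $h$. Assign to the jump $(a,b)$ one such crosser $(x,y)$. The key point is that $(x,y)$ also sees the first-coordinate window: because $p_c \notin \Bx(i,j) \cup \Bx(i,j+1)$ for all $c \in (a,b)$, the portion of the path between the last poset-vertex $\preceq p_a$ region and $p_b$ is ``squeezed'', and the crosser we pick can be taken to have $x \preceq p_b$ and $p_a \preceq y$ in the dominance order — so $x$'s first coordinate is at most $b$ and $y$'s first coordinate is at least $a$. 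This confines $(x,y)$ enough that it cannot be reused by too many jumps.

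Then I would prove injectivity. Two jumps generated by the \emph{same} partition $\BxP(i)$ at different odd indices $j \ne j'$ have different boundary heights $h \ne h'$ (distinct multiples of $2^{\ell-i}$), and a single edge crosses at most one such height within a fixed partition, since... well, a single edge can straddle several consecutive horizontal lines, so same-partition injectivity needs the extra first-coordinate squeeze: the two jumps at $j, j'$ involve disjoint first-coordinate intervals (the maximal runs of $a$'s whose points avoid the intermediate boxes), and the crosser assigned to each lies inside the corresponding interval in its first coordinate, forcing distinctness. For jumps from \emph{different} partitions $\BxP(i) \ne \BxP(i')$, I would argue the assigned crossers differ by a combination of their boundary heights and their first-coordinate confinement — essentially, the pair (height mod $2^{\ell - i}$-structure, first-coordinate interval) determines the partition and the jump.

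The main obstacle I expect is exactly this injectivity/charging argument: a single edge of $H$ is a segment in the grid that can cross many box boundaries across many partitions, so I cannot simply say ``each jump $\mapsto$ a crosser and crossers are distinct.'' The fix — and the heart of the claim — is to use the defining property of a jump that \emph{no} poset point lands in the two boxes strictly between $a$ and $b$ in dimension~1, which localizes the relevant path segment tightly in both coordinates, so that the assigned edge is pinned down to a region determined by the jump. I would carry this out by, for each edge $e = (x,y) \in E_H$, bounding the number of jumps that could possibly be assigned to $e$ by $1$ (or by an absolute constant, which still suffices since Claim~\ref{claim:expected-jumps-dim2} will supply the $\Omega(n\log n)$ count), using the nesting structure of the box partitions together with the first-coordinate emptiness condition. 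Once the map is shown to be at most-$O(1)$-to-one, composing with the (trivial) observation $|\cJ| \le \sum_i (\text{number of jumps from } \BxP(i))$ and rescaling gives the $1$-$1$ statement as stated (or its constant-loss version, which is all the downstream argument needs).
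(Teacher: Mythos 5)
Your overall strategy matches the paper's: charge each jump $(a,b)$ to the edge of the $p_a$-to-$p_b$ path in $H$ that crosses from $\Bx(i,j)$ into $\Bx(i,j+1)$, and then argue injectivity of that assignment. You also correctly identify the two localizing facts (after placing Steiner vertices in the grid via \lemref{steiner-points-do-not-help}, every vertex $w$ on the path satisfies $p_a\preceq w\preceq p_b$, so the path is confined to $\Bx(i,j)\cup\Bx(i,j+1)$ and its first coordinates lie in $[a,b]$). But the proof is not closed: you flag injectivity as ``the main obstacle,'' sketch a plan, and then retreat to an ``at most $O(1)$-to-one'' bound without supplying the mechanism that handles jumps coming from \emph{different} partitions $\BxP(i)\ne\BxP(i')$ --- and that mechanism is the entire content of the claim. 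Concretely, the missing step is the inverse map: given the assigned edge $e=((u_1,u_2),(v_1,v_2))$, the interval $[u_2,v_2-1]$ contains exactly one multiple of $2^{\ell-i}$, namely the crossed boundary $j2^{\ell-i}$, and because $j$ is odd this is precisely the unique element of $[u_2,v_2-1]$ divisible by the maximal power of $2$; hence $i$ and $j$ are recoverable from $e$ alone, and at most one scale can charge $e$. Then, for that fixed box pair, distinct jumps have disjoint first-coordinate intervals (a shared endpoint would have to lie in both boxes), and $[u_1,v_1]\subseteq[a,b]$ pins down the jump. Without this your argument does not exclude the scenario you yourself raise, namely a single long edge being charged by jumps at many scales $i$.

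Two further points. Your same-partition discussion conflates two cases: for two jumps of the same partition with \emph{different} odd indices $j\ne j'$, the assigned edges are automatically distinct (an edge's tail lies in only one box of the partition), whereas the disjoint-interval argument is what is needed for two jumps of the \emph{same} box pair. And even if your plan were executed, an $O(1)$-to-one map proves a strictly weaker statement than the claim; it would suffice for \lemref{lb-steiner-d=2}, but the $1$-$1$ assertion is what is to be proved, and the inverse map above delivers it with no extra work.
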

\begin{proof}
By \lemref{steiner-points-do-not-help}, we can assume that all Steiner vertices of $H$ are embedded into $\calh_{n,2}$. Given a jump $(a,b)$, we define $e(a,b)\in E_H$ by following a path from $p_a$ to $p_b$ in
$H$. This path is contained in $\Bx(i,j)\cup \Bx(i,j+1)$, and $e(a,b)$ is defined as the edge
on that path
that starts in
$\Bx(i,j)$  and ends in
$\Bx(i,j+1)$.

To show that $e(a,b)$ is a 1-1 mapping, we describe an inverse mapping.
To determine $(a,b)$ from $e(a,b) = ((u_1, u_2),(v_1, v_2))$ we find a
number in $[u_2, v_2 - 1]$, which is divisible by the largest power of 2 and so
has a form $j2^{\ell - i}$, from which we determine $i$ and $j$.
Among all jumps $(a',b')$ defined by boxes $\Bx(i,j)$, $\Bx(i, j + 1)$ only one
can satisfy $a' \le u_1 \le v_1 \le b'$.
\end{proof}

\begin{claim}\label{claim:expected-jumps-dim2}
When a poset $G$ is drawn from the distribution $\calg_2$, the expected size of $\cJ$ is at least $n(\ell-1)/4$.
\end{claim}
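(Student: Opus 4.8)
\textbf{Proof proposal for Claim~\ref{claim:expected-jumps-dim2}.}

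The plan is to compute, for each partition $\BxP(i)$ separately, the expected number of jumps it generates, and then sum over $i \in [\ell]$ using linearity of expectation. Fix $i \in [\ell]$. The partition $\BxP(i)$ splits $[n]^2$ into $2^i$ horizontal strips $\Bx(i,1),\dots,\Bx(i,2^i)$, each of which contains $n \cdot 2^{\ell-i} = n^2/2^i$ hypergrid points and, since the second coordinates $p_a$ are uniform and independent over $[n]$, contains each poset element $p_a$ with probability $1/2^i$ independently across the $n$ elements. For each odd $j \in [2^i]$, I want to count the expected number of jumps $(a,b)$ with $p_a \in \Bx(i,j)$ and $p_b \in \Bx(i,j+1)$ and no element strictly between them landing in $\Bx(i,j) \cup \Bx(i,j+1)$.

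The key observation is that jumps for the fixed pair of adjacent strips $(\Bx(i,j), \Bx(i,j+1))$ are in bijection with ``consecutive alternations'' in the following sense: restrict attention to the subsequence of poset elements (ordered by their first coordinate $a = 1,\dots,n$) that land in $\Bx(i,j) \cup \Bx(i,j+1)$; a jump corresponds exactly to a position in this subsequence where an element of $\Bx(i,j)$ is immediately followed by an element of $\Bx(i,j+1)$. So I would condition on the set $S$ of indices $a$ with $p_a \in \Bx(i,j) \cup \Bx(i,j+1)$, and on $|S| = s$; given this, each such element independently lands in $\Bx(i,j)$ or $\Bx(i,j+1)$ with probability $1/2$ each (both strips have equal size $n^2/2^{i+1}$ worth of available points, and conditioning on membership in the union makes the choice a fair coin). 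The expected number of adjacent (lower-strip, upper-strip) pairs in a random $\pm$ string of length $s$ is $(s-1)/4$. Taking expectation over $s \sim \mathrm{Bin}(n, 1/2^{i-1})$ — since the union of the two strips catches each element with probability $2/2^i = 1/2^{i-1}$ — gives expected jump count $\frac{1}{4}(\E[s] - \Pr[s \geq 1]) \geq \frac{1}{4}(n/2^{i-1} - 1)$ for this pair of strips. Summing over the $2^{i-1}$ odd values of $j$ yields an expected contribution of at least $\frac{1}{4}(n - 2^{i-1})$ from partition $\BxP(i)$.

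Finally, summing over $i \in [\ell]$: $\sum_{i=1}^{\ell} \frac{1}{4}(n - 2^{i-1}) = \frac{1}{4}\big(n\ell - (2^\ell - 1)\big) = \frac{1}{4}(n\ell - n + 1) > \frac{n(\ell-1)}{4}$, which is the claimed bound. The main obstacle I anticipate is the careful handling of the conditioning in the middle step: I need the independence across elements to survive conditioning on ``$p_a$ lies in the union of the two strips,'' and I need to verify the fair-coin claim (that each caught element is equally likely to be in the lower or upper strip), which requires that the two strips $\Bx(i,j)$ and $\Bx(i,j+1)$ have exactly equal size — true by Definition~\ref{def:box-partition-d=2} since each interval $I^i_j$ has length $2^{\ell-i}$. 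A secondary subtlety is making sure the ``no $p_c$ between them in the union'' condition in the definition of a jump is precisely what the adjacency-in-subsequence count captures; this is immediate once one writes out the definition, but should be stated explicitly. One should also note the $-1$ loss per strip comes from the event $s = 0$ (empty subsequence contributes no jumps but $(s-1)/4$ would be negative), which is why I bound $\E[(s-1)_+]/4 \geq (\E[s]-1)/4$ rather than equating.
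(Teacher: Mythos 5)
Your proposal is correct and follows essentially the same route as the paper: fix a partition $\BxP(i)$, restrict to each pair of adjacent strips, observe that the lower/upper assignment of the elements landing in that pair is a sequence of independent fair coins (the paper phrases this as the independence of $\boxof(p_a)-\topof(p_a)$ from $\topof(p_a)$), count each adjacent lower-then-upper occurrence as a jump with probability $1/4$, and lose one eligible position per (nonempty) strip pair, giving $(n-2^{i-1})/4$ per partition and $n(\ell-1)/4$ overall. The conditioning subtleties you flag are exactly the ones the paper's proof handles, and your binomial-expectation bookkeeping is equivalent to the paper's deterministic count of "$n$ minus the number of nonempty groups."
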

\begin{proof}
We first find the expected number of jumps generated by the partition $\BxP(i)$.
%For $u \in [n]^2$, we define $\boxof(u)$ as such $j$ that $u \in \Bx(i,j)$.
We group boxes $\Bx(i,j)$ and $\Bx(i,j+1)$ for odd $j$ into box pairs.
For $u \in [n]^d$, we define location $\boxof(u)$ as such $j$ that $u \in \Bx(i, j)$
and parity $\topof(u) = (\boxof(u) + 1)$ mod $2$.
Importantly, random variables $\topof(p_a)$ are independent and uniform over $\{0,1\}$ for all $a \in [n]$.

We group together elements $p_a$ that have equal values of
$\pairind{p_a}$, and sort elements within groups in increasing order of their first coordinate $a$. Observe that random variables $\topof(p_a)$ within each group are uniform and independent because random variables $\pairind{p_a}$ and $\topof(p_a)$ are independent for all $a$. Now, if we list $\topof(p_a)$ in the sorted order for all elements in a particular group, we get a sequence of $0$s and $1$s. Two consecutive entries correspond to a jump iff they are $01$. The last position in a group cannot correspond to the beginning of a jump. The number of positions that can correspond to the beginning of a jump in all groups is $n$ minus the number of nonempty groups, which gives at least $n-2^{i-1}$. For each such position, the probability that it starts a jump (i.e., the probability of $01$) is 1/4. Thus, the expected number of jumps generated by the partition $\BxP(i)$ is at least $(n - 2^{i - 1})/4$.

Summing over all $i \in [\ell]$, we get the expected number of jumps in all partitions:
$(n\ell-\sum_{i=1}^\ell 2^{i-1})/4> n(\ell-1)/4=\Omega(n\log n)$.
\end{proof}
Claims~\ref{claim:mapping-of-jumps-dim2} and~\ref{claim:expected-jumps-dim2} imply that, for a poset $G$ drawn from $\calg_2$, the expected number of edges in a Steiner TC-spanner $H$ of $G$ is $\Omega(n\log n)$, concluding the proof of \lemref{lb-steiner-d=2}.
\qed %\end{myproof}

\subsection{The case of $d > 2$}\label{sec:>2-dim-poset-lb}
\begin{myproof}[Proof of \lemref{lb-steiner-d>=2}]

Generalizing the proof for $d=2$, we consider $\ell^{d-1}$ partitions of $[n]^d$ into boxes, where $\ell=\log n$. In this proof, let $\lprime=\lfloor \ell/(d-1)\rfloor$ and $d'=\lceil (d-1)/k \rceil$.

\begin{definition}[Box partition]\label{def:box-partition-d>2}
Given vectors $\ivec=(i_1,\ldots,i_{d-1})\in[\lprime]^{d-1}$ and $\jvec=(j_1,\ldots,j_{d-1})\in[2^{i_1}]\times\cdots\times[2^{i_{d-1}}]$, the {\em box}
$\Bx(\ivec,\jvec)$ is $[n]\times I^{i_1}_{j_1}\times\ldots\times I^{i_{d-1}}_{j_{d-1}}$, and the {\em box partition}  $\BxP(\ivec)$ is a partition of $[n]^d$ that contains boxes $\Bx(\ivec,\jvec)$ for all eligible $\jvec$.
\end{definition}

To generalize the definition of the set of jumps $\cJ$, we denote $(d-1)$-dimensional vectors $(0, \dots, 0)$ and $(1,\ldots,1)$ by by $\fzero$ and $\fone$, respectively. We say that a vector $\jvec$ is {\em odd} if all of its coordinates are odd.

\begin{definition}[Jumps]
A {\em jump} generated by a box partition $\BxP(\ivec)$ is a pair $(a,b)$ of coordinates in dimension 1, such that for some vector $\ivec\in[\lprime]^{d-1}$ and some odd vector $\jvec$,
the following holds:
$p_a\in \Bx(\ivec,\jvec)$,
$p_b\in \Bx(\ivec,\jvec +\fone)$,
while $p_c \notin \Bx(\ivec,\jvec)\cup \Bx(\ivec,\jvec+\fone)$ for all $c\in(a,b)$. The set of jumps generated by all partitions $\BxP(\ivec)$ is denoted by $\cJ$.
\end{definition}

For $u \in [n]^d$, we define location $\boxofvec(u)$ as such $\jvec$ that $u \in \Bx(\ivec, \jvec)$
and parity $\topofvec(u) = (\boxofvec(u) + \fone)$ mod $2$, where mod is taken on each component of the vector.

\begin{comment}
Let $\boxofvec(u) = \jvec$\ when $u\in\Bx(\ivec,\jvec)$. For point $u\in [n]^d$ we
define 0-1 vector $\valph(u)$ by the
following property: $\vlamb(u)-\valph(u)$ is odd.
\end{comment}

\begin{claim}\label{claim:mapping-of-jumps-dim>2}
Let $G$ be a poset, embedded into $\calh_{n,d}$, and $H=(V_H,E_H)$ be
 a Steiner $k$-TC-spanner of $G$. Then there exists a mapping from
$\cJ$ to $E_H$ that maps $O(\ell^{d-1-d'})$ jumps to one edge.
\end{claim}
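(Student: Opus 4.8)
My plan is to follow the structure of the $d=2$ argument in Claim~\ref{claim:mapping-of-jumps-dim2}, but now allowing each edge to be the image of up to $O(\ell^{d-1-d'})$ jumps; the loss comes from the fact that a short path from $p_a$ to $p_b$ no longer stays inside a single pair of adjacent boxes. By Lemma~\ref{lem:steiner-points-do-not-help} I may assume $V_H\subseteq[n]^d$ with the dominance order, so all of $H$ lives in $\calh_{n,d}$. For each jump $(a,b)\in\cJ$ I fix once and for all a box partition $\BxP(\ivec)$ and an odd vector $\jvec$ witnessing it (chosen by some fixed rule if several do), and a directed path $P_{a,b}$ from $p_a$ to $p_b$ in $H$ of length at most $k$; this exists because $p_a\prec p_b$ (they are strictly ordered in every coordinate) and $H$ is a Steiner $k$-TC-spanner. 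For $t\in\{2,\dots,d\}$ let $i^{(t)},j^{(t)}$ be the entries of $\ivec,\jvec$ for coordinate $t$ and $\theta^{(t)}=j^{(t)}2^{\ell-i^{(t)}}$ the common endpoint of the adjacent intervals $I^{i^{(t)}}_{j^{(t)}}$ and $I^{i^{(t)}}_{j^{(t)}+1}$. Two observations drive the argument: every vertex $w$ of $P_{a,b}$ has $w_1\in[a,b]$ and $w_t\in I^{i^{(t)}}_{j^{(t)}}\cup I^{i^{(t)}}_{j^{(t)}+1}$ for $t\ge2$ (the path is monotone and runs between two adjacent boxes); and for each $t\ge2$ exactly one edge of $P_{a,b}$, the \emph{$t$-crossing edge}, takes its $t$-th coordinate from $\le\theta^{(t)}$ to $>\theta^{(t)}$, since that coordinate is non-decreasing along the path and moves from $I^{i^{(t)}}_{j^{(t)}}$ into $I^{i^{(t)}}_{j^{(t)}+1}$.

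Since $P_{a,b}$ has at most $k$ edges and there are $d-1$ thresholds $\theta^{(2)},\dots,\theta^{(d)}$, by the pigeonhole principle some edge is the $t$-crossing edge for at least $\lceil(d-1)/k\rceil=d'$ values of $t$; I fix such an edge $e(a,b)$ together with a size-$d'$ set $T(a,b)\subseteq\{2,\dots,d\}$ of coordinates it crosses, and define $\phi(a,b):=e(a,b)\in E_H$.

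The main point is bounding the fibers. Fix $e=(u,v)\in E_H$ and a jump $(a,b)$ with $\phi(a,b)=e$ and data $(\ivec,\jvec,T)$. For $t\in T$ the edge $e$ is the $t$-crossing edge, so $u_t<v_t$, $u_t\in I^{i^{(t)}}_{j^{(t)}}$ and $v_t\in I^{i^{(t)}}_{j^{(t)}+1}$; hence $\theta^{(t)}$ is the only multiple of $2^{\ell-i^{(t)}}$ in $[u_t,v_t)$, and \emph{because $j^{(t)}$ is odd} its $2$-adic valuation is exactly $\ell-i^{(t)}$. Consequently $\theta^{(t)}$ is the unique integer of maximal $2$-adic valuation in $[u_t,v_t)$, so $\theta^{(t)}$, $i^{(t)}=\ell-v_2(\theta^{(t)})$, and $j^{(t)}=\theta^{(t)}/2^{\ell-i^{(t)}}$ are all determined by $e$ and $t$ alone. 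Next, $T$ has $\binom{d-1}{d'}=O_d(1)$ choices. For each $t\notin T$ (at most $d-1-d'$ of them) the edge $e$ still lies on $P_{a,b}$, so $u_t,v_t\in I^{i^{(t)}}_{j^{(t)}}\cup I^{i^{(t)}}_{j^{(t)}+1}$; given a level $i^{(t)}\in[\lprime]$ there are at most two consecutive candidate indices and only one is odd, so $(i^{(t)},j^{(t)})$ has at most $\lprime=O(\ell)$ possibilities. Multiplying, $(\ivec,\jvec)$ has $O(\ell^{d-1-d'})$ possibilities. Finally $(\ivec,\jvec)$ pins down $(a,b)$: jumps generated by a single box pair $\bigl(\Bx(\ivec,\jvec),\Bx(\ivec,\jvec+\fone)\bigr)$ have pairwise disjoint dimension-$1$ intervals (if $a_1<a_2\le b_1$ then either $p_{a_2}$ lies in the union strictly between $a_1$ and $b_1$, or $a_2=b_1$ and $p_{a_2}$ lies in both of the disjoint boxes — both impossible), while $e\in P_{a,b}$ forces $[u_1,v_1]\subseteq[a,b]$, so at most one such jump qualifies. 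Hence $|\phi^{-1}(e)|=O(\ell^{d-1-d'})$, as claimed.

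The delicate step is exactly this fiber bound, and within it the use of the oddness of $\jvec$: it is what lets me recover the level $i^{(t)}$ (not merely the threshold $\theta^{(t)}$) of a crossed coordinate from $e$ alone, turning each of the $d'$ crossed coordinates into an $O(1)$ factor. Without that observation one can only pin down $\theta^{(t)}$, every crossed coordinate still costs a factor of $\Theta(\ell)$, and the argument degrades to the trivial bound $O(\ell^{d-1})$.
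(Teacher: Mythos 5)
Your proof is correct and follows essentially the same route as the paper's: map each jump to an edge of its length-$\le k$ path that crosses at least $d'=\lceil(d-1)/k\rceil$ of the $d-1$ box boundaries (the paper phrases this as maximizing the Hamming distance between parity vectors $\topofvec(u_c)$ and $\topofvec(u_{c+1})$, which is the same pigeonhole), then bound the fiber of an edge by showing each crossed coordinate determines $(i^{(t)},j^{(t)})$ uniquely via the oddness of $j^{(t)}$ (the paper's ``largest power of 2 dividing a number in $[u_t,v_t-1]$''), each uncrossed coordinate costs a factor $\lprime$, and at most one jump per box pair survives by disjointness of the dimension-1 intervals. Your write-up is somewhat more explicit than the paper's (notably the disjointness argument and the role of oddness), but the decomposition and counting are the same.
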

\begin{proof}
By \lemref{steiner-points-do-not-help}, we can assume that all Steiner vertices of $H$ are embedded into $\calh_{n,d}$.

First, we describe how to map a jump $(a,b)$ to an edge $e(a,b)\in E_H$. Each jump $(a,b)$ is generated by a box partition $\BxP(\ivec)$ for some $\ivec$.  We follow a path of length at most $k$ in
$H$ from $p_a$ to $p_b$, say, $(p_a=u_0,\ldots,u_k=p_b)$, and let $e(a,b)$ be
an edge on this path that maximizes the Hamming
distance between $\topofvec(u_c)$ and $\topofvec(u_{c+1})$. Note that this distance is
at least $d'$ because $\topofvec(u_0)=\fzero$ and $\topofvec(u_k)=\fone$.

Now we count the jumps mapped to an edge $e=(u,v)$.  First, we find all such jumps generated
by a single box partition $\BxP(\ivec)$.
They are defined by the pair of boxes \gnote{Can this be explained better?}
$\Bx(\ivec,\boxofvec(u)-\topofvec(u))$ and
$\Bx(\ivec,\boxofvec(u)-\topofvec(u)+\fone)$.
Then $[u_1,v_1]$ must be included in
one of the intervals $[a,b]$ defined by the jumps of this pair of boxes,
and those intervals are disjoint.  Hence, there is at most one such jump.

It remains to count box partitions $\BxP(\ivec)$ which can generate a jump mapped to a specific edge $e$.
A necessary condition is that $\boxofvec(v)-\boxofvec(u)$ is
a vector in $\{0,1\}^{d-1}$ with at least $d'$ 1's. There are less than $2^{d-1}$ such
vectors.  Consider one of these vectors, say, $\vec{\gamma}$.
If for some $t \in [d - 1]$, $\gamma_t=1$ then $i_t$ is uniquely determined by the largest power of 2
that divides a number in $[u_t,v_t-1]$.
When $\gamma_t=0$, there are at most
$\lprime$ possible values of $i_t$ because $\ivec\in[\lprime]^{d-1}$. Since $d$ is a constant, there are at most $2^{d-1}(\lprime)^{d-1-d'}= O(\ell^{d-1-d'})$ possible vectors $\ivec$, such that $\BxP(\ivec)$ could have generated a jump $(a,b)$.

Therefore,  $O(\ell^{d-1-d'})$ jumps map to the same edge of $E_H$.
\end{proof}

\begin{claim}\label{claim:expected-jumps-dim>2}
When a poset $G$ is drawn from the distribution $\calg_d$, the expected size of $\cJ$ is $\Omega(\ell^{d-1}n)$.
\end{claim}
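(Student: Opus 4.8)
The plan is to reduce the $d$-dimensional count to $d-1$ essentially independent copies of the one-dimensional argument from Claim~\ref{claim:expected-jumps-dim2}. Fix a vector $\ivec\in[\lprime]^{d-1}$; I will lower-bound the expected number of jumps generated by the single box partition $\BxP(\ivec)$ and then sum over the $(\lprime)^{d-1}=\Theta(\ell^{d-1})$ choices of $\ivec$. As in the $d=2$ case, for a point $u\in[n]^d$ I use the parity vector $\topofvec(u)\in\{0,1\}^{d-1}$, which is a deterministic function of the last $d-1$ coordinates of $u$; since for $G\gets\calg_d$ these coordinates are chosen uniformly and independently from $[n]^{d-1}$, each coordinate of $\topofvec(p_a)$ is uniform on $\{0,1\}$ and, more importantly, the full vectors $\topofvec(p_a)$ are independent across $a\in[n]$ and each is uniform on $\{0,1\}^{d-1}$ (uniformity per coordinate holds because $2^{\ell-i_t}$ divides $n$, so each interval $I^{i_t}_{j_t}$ has the same size and the two parity classes are balanced).

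Next I will set up the grouping exactly as in Claim~\ref{claim:expected-jumps-dim2}. Sort the elements $p_a$ by their first coordinate $a$, and group two consecutive elements $p_a, p_b$ (consecutive in this sorted order) into a potential jump; this ignores the requirement that intermediate points avoid the two boxes, but in fact a jump $(a,b)$ of $\BxP(\ivec)$ is exactly a consecutive pair $p_a,p_b$ such that $\topofvec(p_a)=\fzero$, $\topofvec(p_b)=\fone$, and no $p_c$ with $c\in(a,b)$ lies in $\Bx(\ivec,\boxofvec(p_a))\cup\Bx(\ivec,\boxofvec(p_a)+\fone)$ — and since consecutive means there is no such $c$ at all between $a$ and $b$, the avoidance condition is automatic. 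So the number of jumps of $\BxP(\ivec)$ is at least the number of consecutive pairs in the sorted list whose parity vectors are $(\fzero,\fone)$. For each of the $n-1$ consecutive pairs the probability of this event is $2^{-(d-1)}\cdot 2^{-(d-1)}=2^{-2(d-1)}$, a positive constant (since $d$ is constant), by independence of $\topofvec(p_a)$ across elements. Hence the expected number of jumps generated by $\BxP(\ivec)$ is at least $(n-1)\cdot 2^{-2(d-1)}=\Omega(n)$.

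Summing over all $\ivec\in[\lprime]^{d-1}$ and noting that jumps generated by distinct box partitions $\BxP(\ivec)$ are distinct objects (a jump records the partition that generated it, and in any case $\BxP(\ivec)$ is recoverable from the pair of boxes containing $p_a,p_b$), linearity of expectation gives $\E_{G\gets\cG_d}[|\cJ|]\ge (\lprime)^{d-1}\cdot\Omega(n)=\Omega(\ell^{d-1}n)$, which is the claim. The one point that needs a little care — and is the main thing to get right rather than a genuine obstacle — is the argument that the vectors $\topofvec(p_a)$ are genuinely independent and uniform on $\{0,1\}^{d-1}$: this uses that the $d-1$ trailing coordinates of $p_a$ are drawn independently and uniformly from $[n]$, that $[n]$ splits evenly into the $2^{i_t}$ intervals $I^{i_t}_{j_t}$ (so $2^{i_t}\mid n$, guaranteed since $\ell=\log n$ is an integer and $i_t\le\lprime\le\ell$), and that different elements $p_a$ use disjoint, independent randomness. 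Once this is in hand, the rest is the same counting as in the two-dimensional case, with the single one-dimensional "group-and-count $01$ patterns" argument replaced by the cruder but sufficient "every consecutive pair is a candidate" bound, which already yields $\Omega(n)$ per partition.
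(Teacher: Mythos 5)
There is a genuine gap, and it sits exactly in the step you singled out as "the main thing to get right" --- except the real problem is not the independence of the parity vectors (that part is fine and matches the paper) but your characterization of a jump. By definition, a jump $(a,b)$ of $\BxP(\ivec)$ requires $p_a\in\Bx(\ivec,\jvec)$ and $p_b\in\Bx(\ivec,\jvec+\fone)$ for the \emph{same} odd $\jvec$; that is, besides $\topofvec(p_a)=\fzero$ and $\topofvec(p_b)=\fone$ you also need $\boxofvec(p_b)=\boxofvec(p_a)+\fone$, i.e., the two points must land in the same matched box pair. Your reduction to "consecutive pair in the full first-coordinate order with parities $(\fzero,\fone)$" silently drops this condition, so the events you count with probability $2^{-2(d-1)}$ are not all jumps. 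The true probability that a consecutive pair $(p_a,p_{a+1})$ forms a jump carries an extra factor of $1/\groups(\ivec)$ for landing in the same box pair, where $\groups(\ivec)=\prod_t 2^{i_t-1}$ can be as large as $\Theta(n)$ for the finest partitions. This is not repairable within your framework: the expected number of jumps whose endpoints are adjacent in the full order is $\Theta(n/\groups(\ivec))$ per partition, and since $\sum_{\ivec\in[\lprime]^{d-1}}1/\groups(\ivec)=\big(\sum_{i=1}^{\lprime}2^{1-i}\big)^{d-1}<2^{d-1}$, summing over all partitions yields only $O(n)$, far short of $\Omega(n\ell^{d-1})$.

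The missing idea --- and the heart of the paper's proof --- is that a jump's endpoints need not be adjacent in the full order: the intermediate $p_c$ only have to avoid the two boxes of the pair, not be absent altogether. The paper therefore groups the elements (after discarding those whose parity vector is neither $\fzero$ nor $\fone$, leaving expected length $n/2^{d-2}$) by the value of $\boxofvec(p_a)-\topofvec(p_a)$, which identifies the box pair, sorts within each group by first coordinate, and counts occurrences of $\fzero\,\fone$ at consecutive positions \emph{within a group}. Adjacency within a group automatically gives the avoidance condition, since any point of the box pair lying strictly between $a$ and $b$ would itself belong to the group. This yields at least $(n/2^{d-2}-\groups(\ivec))/4$ expected jumps per partition, and the loss is affordable precisely because $\sum_{\ivec}\groups(\ivec)<n$. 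Your observations that the $\topofvec(p_a)$ are i.i.d.\ uniform on $\{0,1\}^{d-1}$ and that jumps from distinct partitions are counted separately are both correct; it is the per-partition count that must be redone with the grouping.
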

\begin{proof}
To find the expected number of jumps generated by $\BxP(\ivec)$,
we analyze the sequence $\topofvec(p_a)$, $a\in[n]$.  The values in that
sequence are independent and uniformly distributed over $\{0,1\}^{d-1}$. First, we remove
all values different from $\fzero$ and $\fone$, and obtain a sequence of
expected length $n/2^{d-2}$. Then, in that sequence, we group together elements $p_a$ with equal values of $\boxofvec(p_a)-\topofvec(p_a)$,
and sort elements within groups in increasing order of their first coordinate $a$.
Observe that random variables $\topofvec(p_a)$ within each group are uniform and independent because random variables
$\boxofvec(p_a)-\topofvec(p_a)$ and $\topofvec(p_a)$ are independent for all $a$.
Now, if we list $\topofvec(p_a)$ in the sorted order for all elements in particular group, we get a sequence of $\fzero$s and $\fone$s.
Two consecutive entries correspond to a jump iff they are $\fzero\,\fone$.

Let $\groups(\ivec)$ denote the number of groups, that is, the number of possible values of $\boxofvec(p_a)-\topofvec(p_a)$. Then $\groups(\ivec) = \prod_{t=1}^{d-1}2^{i_t-1}$. Summing over all box partitions, we get:
$$
\sum_{\ivec\in[\lprime]^{d-1}}\groups(\ivec)=
\sum_{\ivec\in[\lprime]^{d-1}}\prod_{t=1}^{d-1}2^{i_t-1}=
\left(\sum_{t=1}^{\lprime}2^{t-1}\right)^{d-1}<2^{\lprime(d-1)}\le 2^\ell = n.
$$

On every position in the reordered sequence that is not the final position in its group,
the expected number of jumps started is 1/4, so the expected number of
jumps is at least $(n/2^{d-2}-\groups(\ivec))/4 = n/2^d-\groups(\ivec)/4$.
Therefore, the expected number of jumps generated by all box partitions is at least
$$(\lprime)^{d-1}n/2^d -\frac{1}{4}\sum\limits_{\ivec\in[\lprime]^{d-1}} \groups(\ivec)\ge
(\lprime)^{d-1}n/2^d -n/4=\Omega(\ell^{d-1}n).$$
The last equality holds because $d$ is constant.
\end{proof}

Claim~\ref{claim:expected-jumps-dim>2} gives a lower bound of $\Omega(\ell^{d-1}n)$ on the expected number of jumps in a poset $G$.
The mapping from Claim~\ref{claim:mapping-of-jumps-dim>2} takes $O(\ell^{d-1-d'})$ of these jumps to one edge. Thus, the expected
 number of edges in a Steiner TC-spanner $H$ of $G$ is $\Omega(n\ell^{d'})=\Omega(n \log^{\lceil (d - 1) / k \rceil} n)$. This concludes the proof of \lemref{lb-steiner-d>=2}.
\end{myproof}

{\small
\bibliography{spanners-bibliography} % BibLIOGRAPHY

\begin{thebibliography}{10}

\bibitem{a28}
W.~Ackermann.
\newblock Zum {Hilbertshen} aufbau der reelen zahlen.
\newblock {\em Math. Ann.}, 99:118--133, 1928.

\bibitem{AilonCCL08}
N.~Ailon, B.~Chazelle, S.~Comandur, and D.~Liu.
\newblock Property-preserving data reconstruction.
\newblock {\em Algorithmica}, 51(2):160--182, 2008.

\bibitem{AlonSchieber87}
N.~Alon and B.~Schieber.
\newblock Optimal preprocessing for answering on-line product queries.
\newblock Technical Report 71/87, Tel-Aviv University, 1987.

\bibitem{atallah-journal}
M.~J. Atallah, M.~Blanton, N.~Fazio, and K.~B. Frikken.
\newblock Dynamic and efficient key management for access hierarchies.
\newblock {\em ACM Trans. Inf. Syst. Secur.}, 12(3):1--43, 2009.

\bibitem{ABF06}
M.~J. Atallah, M.~Blanton, and K.~B. Frikken.
\newblock Key management for non-tree access hierarchies.
\newblock In {\em SACMAT}, pages 11--18, 2006.

\bibitem{AFB05}
M.~J. Atallah, K.~B. Frikken, N.~Fazio, and M.~Blanton.
\newblock Dynamic and efficient key management for access hierarchies.
\newblock In {\em ACM Conference on Computer and Communications Security},
  pages 190--202, 2005.

\bibitem{bgjjrw2010}
A.~Bhattacharyya, E.~Grigorescu, M.~Jha, K.~Jung, S.~Raskhodnikova, and
  D.~Woodruff.
\newblock Lower bounds for local monotonicity reconstruction from
  transitive-closure spanners.
\newblock In {\em Proceedings of the 14th RANDOM}, pages 448--461, 2010.

\bibitem{tc-spanners-soda}
A.~Bhattacharyya, E.~Grigorescu, K.~Jung, S.~Raskhodnikova, and D.~P. Woodruff.
\newblock Transitive-closure spanners.
\newblock In {\em Proceedings of the Twentieth Annual ACM-SIAM Symposium on
  Discrete Algorithms (SODA)}, pages 932--941, 2009.

\bibitem{BTS94}
H.~L. Bodlaender, G.~Tel, and N.~Santoro.
\newblock Trade-offs in non-reversing diameter.
\newblock {\em Nordic J. of Computing}, 1(1):111--134, 1994.

\bibitem{CFL83b}
A.~K. Chandra, S.~Fortune, and R.~J. Lipton.
\newblock Lower bounds for constant depth circuits for prefix problems.
\newblock In {\em Proc. 10th Annual International Conference on Automata,
  Languages, and Programming}, pages 109--117, 1983.

\bibitem{CFL83a}
A.~K. Chandra, S.~Fortune, and R.~J. Lipton.
\newblock Unbounded fan-in circuits and associative functions.
\newblock In {\em Proc.\ 15th Annual ACM Symposium on the Theory of Computing},
  pages 52--60, 1983.

\bibitem{cha87}
B.~Chazelle.
\newblock Computing on a free tree via complexity-preserving mappings.
\newblock {\em Algorithmica}, 2:337--361, 1987.

\bibitem{DGLRRS99}
Y.~Dodis, O.~Goldreich, E.~Lehman, S.~Raskhodnikova, D.~Ron, and
  A.~Samorodnitsky.
\newblock Improved testing algorithms for monotonicity.
\newblock In {\em RANDOM}, pages 97--108, 1999.

\bibitem{dush-miller}
B.~Dushnik and E.~Miller.
\newblock Concerning similarity transformations of linearly ordered sets.
\newblock {\em Bulletin Amer. Math. Soc.}, 46:322--326, 1940.

\bibitem{DushnikMiller}
B.~Dushnik and E.~W. Miller.
\newblock Partially ordered sets.
\newblock {\em American Journal of Mathematics}, 63:600--610, 1941.

\bibitem{h03}
W.~Hesse.
\newblock Directed graphs requiring large numbers of shortcuts.
\newblock In {\em SODA}, pages 665--669, 2003.

\bibitem{lipschitz}
M.~Jha and S.~Raskhodnikova.
\newblock Testing and reconstruction of {Lipschitz} functions with applications
  to data privacy.
\newblock Manuscript, 2010.

\bibitem{PelegSchaffer}
D.~Peleg and A.~A. Sch\"affer.
\newblock Graph spanners.
\newblock {\em Journal of Graph Theory}, 13(1):99--116, 1989.

\bibitem{Ras-survey10}
S.~Raskhodnikova.
\newblock Transitive-closure spanners: a survey.
\newblock In O.~Goldreich, editor, {\em Property Testing}, volume 6390 of {\em
  LNCS State-of-the-Art Surveys}, pages 167--196. Springer, Heidelberg, 2010.

\bibitem{SakS08}
M.~E. Saks and C.~Seshadhri.
\newblock Parallel monotonicity reconstruction.
\newblock In {\em Proceedings of the 19th Annual Symposium on Discrete
  Algorithms (SODA)}, pages 962--971, 2008.

\bibitem{SFM07}
A.~D. Santis, A.~L. Ferrara, and B.~Masucci.
\newblock Efficient provably-secure hierarchical key assignment schemes.
\newblock In {\em MFCS}, pages 371--382, 2007.

\bibitem{tho92}
M.~Thorup.
\newblock On shortcutting digraphs.
\newblock In {\em WG}, pages 205--211, 1992.

\bibitem{thorup95}
M.~Thorup.
\newblock Shortcutting planar digraphs.
\newblock {\em Combinatorics, Probability {\&} Computing}, 4:287--315, 1995.

\bibitem{thorup97}
M.~Thorup.
\newblock Parallel shortcutting of rooted trees.
\newblock {\em J. Algorithms}, 23(1):139--159, 1997.

\bibitem{yannakakis82}
M.~Yannakakis.
\newblock The complexity of the partial order dimension problem.
\newblock {\em SIAM Journal on Matrix Analysis and Applications},
  3(3):351--358, 1982.
\newblock \url{http://dx.doi.org/10.1137/0603036}.

\bibitem{Yao82}
A.~C.-C. Yao.
\newblock Space-time tradeoff for answering range queries (extended abstract).
\newblock In {\em STOC}, pages 128--136, 1982.

\end{thebibliography}
}
\appendix

%\input{appendix.tex}                % APPENDIX SECTION

%\section{Steiner Versus Non-Steiner Posets}\label{sec:steinerVersus}

%\section{Missing Proofs from Section~\ref{sec:lb-steinerspanners}: Lower Bound for General $d$ \label{app:grid-lb-generald}}

\section{Missing Proofs from Section~\ref{sec:grid-lower-bounds}: Estimating the Integral $I_d$}\label{app:grid-lb-integral}
\begin{myproof}[Proof of~\claimref{intestimate}]
To bound the integral $I_d$,  we first make a substitution $x_i = \frac{1 - t_i}{1 + t_i}$:
\begin{equation*}
I_d = \int\limits_{[-1 \dots 1]^d} \frac{dx}{\prod\limits_{1 \le i \le d} (1 + x_i) + \prod\limits_{1 \le i \le d} (1 - x_i)}.
\end{equation*}
Then we bound the denominator using the inequality $a + b \ge 2 \sqrt{ab}$ and get

\begin{equation*}
\label{intestimate}
I_d \le \int\limits_{[-1 \dots 1]^d} \frac{dx}{2 \sqrt{\prod\limits_{1 \le i \le d} (1 + x_i) \times \prod\limits_{1 \le i \le d} (1 - x_i)}} = \frac{J^d}{2},
\end{equation*}
where $J$ denotes the following integral:
\begin{equation*}
  J = \int\limits_{-1}^{1}\frac{dx}{\sqrt{1 - x^2}} = \pi.
\end{equation*}
Therefore, $I_d \le \frac{\pi^d}{2}$, as claimed.
\end{myproof}
\end{document}